\documentclass[12pt]{iopart}
\usepackage{iopams,mathptm,times,graphicx}

\makeatletter

\newcounter{theorem}
\@addtoreset{theorem}{section}
\renewcommand\thetheorem{\arabic{section}.\arabic{theorem}}
\newenvironment{definition}{\par\medskip\noindent\begingroup{\bf Definition
             \stepcounter{theorem}\thetheorem.}\ \itshape
             \def\@currentlabel{\thetheorem}}{\endgroup\par\medskip}
\newenvironment{lemma}{\par\medskip\noindent\begingroup{\bf Lemma
             \stepcounter{theorem}\thetheorem.}\ \itshape
             \def\@currentlabel{\thetheorem}}{\endgroup\par\medskip}
\newenvironment{theorem}{\par\medskip\noindent\begingroup{\bf Theorem
             \stepcounter{theorem}\thetheorem.}\ \itshape
             \def\@currentlabel{\thetheorem}}{\endgroup\par\medskip}

\newenvironment{remark}{\par\medskip\noindent\begingroup{\bf Remark
             \stepcounter{theorem}\thetheorem.}\
             \def\@currentlabel{\thetheorem}}{\endgroup\par\medskip}
\newenvironment{corollary}{\par\medskip\noindent\begingroup{\bf Corollary
             \stepcounter{theorem}\thetheorem.}\ \itshape
             \def\@currentlabel{\thetheorem}}{\endgroup\par\medskip}
\newenvironment{proof}{\par\noindent{\bf Proof.} }{\proofbox\par\medskip}

\def\proofbox{\hfill{\ensuremath\Box}}

\newdimen\LENB \newdimen\LENW \newdimen\THI
\newdimen\LENWH \newdimen\LENTOT \newcount\N
\def\vbrknlnele#1#2#3{
  \LENB=#1pt \LENW=#2pt \THI=#3pt
  \LENWH=\LENW \divide\LENWH by 2
  \LENTOT=\LENB \advance\LENTOT by \LENW
  \vbox to \LENTOT{
    \vbox to \LENWH{}
    \nointerlineskip
    \vbox to \LENB{\hbox to \THI{\vrule width \THI height \LENB}}
    \nointerlineskip
    \vbox to \LENWH{}
  }}

\def\vbrknln#1{
  \N=#1
  \vcenter{
    \vbox{
      \loop\ifnum\N>0
        \vbox to 4pt{\vbrknlnele{2}{2}{0.1}}
        \nointerlineskip
        \advance\N by -1
      \repeat
  }}}

\def\hbrknlnele#1#2#3{
  \LENB=#1pt \LENW=#2pt \THI=#3pt
  \LENTOT=\LENB \advance\LENTOT by \LENW
  \vcenter{
    \vbox to \THI{
      \hbox to \LENTOT{
        \hfil
        \vrule width \LENB height \THI
        \hfil}
  }}}

\makeatletter

\usepackage{ulem}

\def\journal#1&#2,{\begingroup \let\journal=\dummyjournal
               \it #1\unskip~\bf\ignorespaces #2\rm,\endgroup}
\def\dummyjournal{\errmessage{Reference foul up: nested \journal macros}}

\eqnobysec

\def\eqref#1{(\ref{#1})}

\begin{document}
\title[On the $\tau$-functions of
the reduced Ostrovsky equation and the $A_2^{(2)}$ 2D-Toda system]
  {On the $\tau$-functions of
the reduced Ostrovsky equation and the $A_2^{(2)}$ two-dimensional Toda system}
\author{Bao-Feng Feng$^1$, Ken-ichi Maruno$^1$ and
Yasuhiro Ohta$^{2}$
}
\address{$^1$~Department of Mathematics,
The University of Texas-Pan American,
Edinburg, TX 78539-2999
}
\address{$^2$~Department of Mathematics,
Kobe University, Rokko, Kobe 657-8501, Japan
}
\eads{\mailto{feng@utpa.edu}, \mailto{kmaruno@utpa.edu} and
\mailto{ohta@math.kobe-u.ac.jp}}

\date{\today}
\def\submitto#1{\vspace{28pt plus 10pt minus 18pt}
     \noindent{\small\rm To be submitted to : {\it #1}\par}}

\begin{abstract}
The reciprocal link between the reduced Ostrovsky equation and the $A_2^{(2)}$
two-dimensional Toda system is used to construct
the $N$-soliton solution of the
 reduced Ostrovsky equation.
The $N$-soliton solution of the reduced Ostrovsky
equation is presented in the form of pfaffian through a hodograph
(reciprocal) transformation.
The bilinear equations and
the $\tau$-function of the reduced Ostrovsky equation are
 obtained from the period 3-reduction of the $B_{\infty}$ or
 $C_{\infty}$
two-dimensional Toda system, {\it i.e.}, the $A_2^{(2)}$ two-dimensional Toda system.
One of $\tau$-functions of the $A_2^{(2)}$ two-dimensional Toda system
becomes the square of a pfaffian which
also become a solution of the reduced Ostrovsky equation.
There is another bilinear equation which is a member of the 3-reduced
 extended BKP hierarchy. Using this bilinear equation,
we can also construct the same pfaffian solution.
\par
\kern\bigskipamount\noindent
\today
\end{abstract}

\kern-\bigskipamount
\pacs{02.30.Ik, 05.45.Yv}

\submitto{\JPA}

\section{Introduction}
In this paper, we study the $N$-soliton
solutions of the reduced Ostrovsky equation~
\cite{Ostrovsky,Stepanyants}
\begin{equation}
\partial_x
\left(\partial_t+u\partial_x\right)u
-3u=0\,,\label{vakhnenko}
\end{equation}
which is a special case ($\beta=0$) of the Ostrovsky equation
\begin{equation}
\partial_x
\left(\partial_t+u\partial_x+\beta \partial_x^3\right)u
-\gamma u=0\,.\label{ostrovsky}
\end{equation}
The Ostrovsky equation
was originally derived as a model for weakly nonlinear surface and internal
waves in a rotating ocean~\cite{Ostrovsky,Stepanyants}.
Later, the same equation was derived from different physical situations
by several authors~\cite{Stepanyants,Hunter,Vakhnenko1}.
Note that the reduced Ostrovsky equation (\ref{vakhnenko}) is sometimes
called the Vakhnenko equation or the Ostrovsky-Hunter equation
~\cite{Vakhnenko1,Vakhnenko2,Vakhnenko3,Vakhnenko4,Liu}.
Vakhnenko et al. constructed the $N$ (loop) soliton
solution of the reduced
Ostrovsky equation by using a hodograph (reciprocal) transformation and
the Hirota bilinear method~\cite{Vakhnenko2,Vakhnenko3}.
The same problem was approached from the point of view of inverse
scattering method~\cite{Vakhnenko4}.

Differentiating the reduced Ostrovsky equation (\ref{vakhnenko}) with
respect to $x$,
we obtain
\begin{equation}
u_{txx}+3u_xu_{xx}+uu_{xxx}-3u_x=0\,,\label{short-DP-eq}
\end{equation}
which is known as the short wave limit of the Degasperis-Procesi (DP)
equation~\cite{Hone-Wang,MatsunoPLA}.
This equation is derived from
the DP equation~\cite{DP}
\begin{equation}
U_T+3\kappa^3U_X- U_{TXX}+4UU_X=3U_XU_{XX}+UU_{XXX}\,,\label{DP-eq}
\end{equation}
by taking  a short wave limit $\epsilon\to 0$ with
$U=\epsilon^2(u+\epsilon u_1+\cdots)$, $T=\epsilon^{-1} t$,
$X=\epsilon x$.
Using this connection,
Matsuno constructed the $N$-soliton solution of the short wave model of
the DP equation, {\it i.e.}, the reduced Ostrovsky equation,
from the $N$-soliton solution of the DP
equation~\cite{MatsunoPLA,Matsuno-DP1,Matsuno-DP2}.
This $N$-soliton formula is equivalent to the one obtained by
Vakhnenko. Hone and Wang pointed out that there is a reciprocal link
between the reduced Ostrovsky equation and the first negative flow in
the Sawada-Kotera
hierarchy~\cite{Hone-Wang}.

In this paper, we show
the reciprocal link between
the reduced Ostrovsky equation and the $A_2^{(2)}$
two-dimensional Toda system and investigate their $\tau$-functions.
Using this reciprocal link,
we construct the $N$-soliton solution of the reduced Ostrovsky
equation in the form of pfaffian.
The bilinear equations and
the $\tau$-functions of the reduced Ostrovsky equation are
systematically obtained from the period 3-reduction of the $B_{\infty}$ or
$C_{\infty}$ two-dimensional Toda system.
One of the $\tau$-functions of the $A_2^{(2)}$ two-dimensional Toda system
becomes the square of a pfaffian, by which $N$-soliton solution 
of the reduced Ostrovsky equation is expressed.
We also show that another bilinear equation which is a 
member of the 3-reduced
extended BKP hierarchy, or the so-called negative Sawada-Kotera
hierarchy, 
can give rise to the 
reduced Ostrovsky equation by a hodograph transformation. Using this 
bilinear equation,
we can also construct the same pfaffian solution.

\section{The reduced Ostrovsky equation, the period 3-reduction of the
 $B_{\infty}$
2D-Toda system, and
 the 3-reduced extended BKP hierarchy}

\subsection{The two-dimensional Toda system of $B_{\infty}$-type and
its period 3-reduction}

The two-dimensional Toda (2D-Toda) system of $A_{\infty}$-type,
which is also called the Toda field equation  or the two-dimensional
Toda lattice,
is given as follows~\cite{Leznov-Saveliev,Mikhailov1,Mikhailov2,Mikhailov3}:
\begin{equation}
\frac{\partial^2\theta_n}{\partial x_1 \partial x_{-1}}
=-\sum_{m\in \mathbb{Z}}a_{n,m}e^{-\theta_m}\,,
\quad n\in \mathbb{Z}\label{2dToda}
\end{equation}
where the matrix $A=(a_{n,m})$ is the transpose of the Cartan matrix for
the infinite dimensional Lie algebra $A_{\infty}$~\cite{Nimmo-Willox}.

The $A_\infty$ 2D-Toda system (\ref{2dToda}) may be written as
\begin{equation}
\frac{\partial^2\theta_n}{\partial x_1 \partial x_{-1}}
=e^{-\theta_{n-1}}-2e^{-\theta_{n}}+e^{-\theta_{n+1}}\,.
\end{equation}
The $A_\infty$
2D-Toda system (\ref{2dToda}) is transformed
into
the bilinear equation
\begin{equation}
-\left(\frac{1}{2}D_{x_1}D_{x_{-1}}-1\right)\tau_n\cdot \tau_n=\tau_{n-1}\tau_{n+1}\,,\label{2dtl-bilinear}
\end{equation}
through the dependent variable transformation
\begin{equation}
\theta_n=-\ln \frac{\tau_{n+1}\tau_{n-1}}{\tau_n^2}\,.
\end{equation}
Here $D_{x}$ is the Hirota $D$-operator which is defined
as
\begin{equation}
D_x^na(x)\cdot b(x)=\left(\partial_x-\partial_{x'}\right)^na(x)b(x')|_{x=x'}\,.
\end{equation}

\begin{lemma}[Ueno-Takasaki\cite{Ueno-Takasaki},
Babich-Matveev-Sall\cite{Matveev},
Hirota\cite{HirotaBook},Nimmo-Willox\cite{Nimmo-Willox}]
The bilinear equations of the 2D-Toda lattice hierarchy including
(\ref{2dtl-bilinear}) have the
following Gram-type determinant solution:
\begin{equation}
\tau_n={\rm det}\left(
\psi_{i,j}^{(n)}\right)_{1\leq i,j \leq M}\,\,,
\end{equation}
where
\begin{equation}
\psi_{i,j}^{(n)}=c_{i,j}+(-1)^n\int_{-\infty}^{x_1}\varphi_i^{(n)}
\hat{\varphi}_j^{(-n)}dx_1\,.
\end{equation}
Here $c_{i,j}$ are constants,
$\varphi_{i,j}^{(n)}$ and $\hat{\varphi}_{i,j}^{(n)}$ satisfy
\begin{equation}
\frac{\partial \varphi_i^{(n)}}{\partial x_k}=\varphi_i^{(n+k)}\,,\quad
\frac{\partial \hat{\varphi}_i^{(n)}}{\partial x_k}=(-1)^{k-1}\hat{\varphi}_i^{(n+k)}\,,
\end{equation}
for $k=\pm 1, \pm 2, \pm 3, \cdots$.

For example, the linear independent set of functions
$\{\varphi_i^{(n)},\hat{\varphi}_j^{(n)}\}$ such that
$$
\varphi_i^{(n)}= p_i^ne^{\xi_i}\,,\quad
\hat{\varphi}_j^{(n)}=q_j^ne^{\eta_j}\,,
$$
$\xi_i=p_ix_1+\frac{1}{p_i}x_{-1}+p_i^2x_2+\frac{1}{p_i^2}x_{-2}+p_i^3x_3
+\frac{1}{p_i^3}x_{-3}+\cdots +\xi_{i0}$ and
$\eta_i=q_ix_1+\frac{1}{q_i}x_{-1}
-q_i^2x_2-\frac{1}{q_i^2}x_{-2}+q_i^3x_3
+\frac{1}{q_i^3}x_{-3}+\cdots +\eta_{i0}$ for $i,j=1,2,\cdots, M$,
gives $M$-soliton solution of the $A_{\infty}$ 2D-Toda system.
\end{lemma}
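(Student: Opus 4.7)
The plan is to verify the Gram-type determinant solution by writing every quantity appearing in (\ref{2dtl-bilinear}) as a bordered determinant and invoking the Jacobi (Desnanot--Jacobi) identity, equivalently a Pl\"ucker relation. The first step is to compute the $x_{\pm 1}$-derivatives of $\psi_{i,j}^{(n)}$. The fundamental theorem of calculus gives $\partial_{x_1}\psi_{i,j}^{(n)} = (-1)^n \varphi_i^{(n)}\hat{\varphi}_j^{(-n)}$, while differentiating under the integral and applying the recursions $\partial_{x_{-1}}\varphi_i^{(n)} = \varphi_i^{(n-1)}$ and $\partial_{x_{-1}}\hat\varphi_j^{(-n)} = \hat\varphi_j^{(-n-1)}$ produces an integrand that telescopes to $\partial_{x_1}\bigl(\varphi_i^{(n-1)}\hat\varphi_j^{(-n-1)}\bigr)$; with the boundary term vanishing at $-\infty$ one obtains $\partial_{x_{-1}}\psi_{i,j}^{(n)} = (-1)^n \varphi_i^{(n-1)}\hat\varphi_j^{(-n-1)}$. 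Both derivatives are therefore rank-one modifications of the Gram matrix $\Psi_n = (\psi_{i,j}^{(n)})$, which is precisely the structure demanded by Jacobi.

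Next, one introduces bordered $(M+1)\times(M+1)$ determinants built from $\Psi_n$ by appending a column of $\varphi_i^{(n)}$'s and a row of $\hat\varphi_j^{(-n)}$'s, together with a corner entry that records the relevant shift in $n$. Cofactor expansion combined with the rank-one formulas identifies each of $\partial_{x_1}\tau_n$, $\partial_{x_{-1}}\tau_n$, and $\partial_{x_1}\partial_{x_{-1}}\tau_n$ with a single- or doubly-bordered determinant, while $\tau_{n\pm 1}$ arise from iterating the bordering with shifted indices. Applying the Jacobi identity to the resulting $(M+2)\times(M+2)$ matrix --- extracting the $2\times 2$ principal minor in rows and columns $\{M+1,M+2\}$ --- yields an algebraic relation among these bordered determinants which, after the sign factors are cleared, is exactly (\ref{2dtl-bilinear}). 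The same bordering framework, with further $x_{\pm k}$-derivatives, produces every other bilinear equation in the 2D-Toda hierarchy.

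The soliton specialization is then verified directly. With the stated $\xi_i,\eta_j$ one reads off $\partial_{x_k}\xi_i = p_i^k$ and $\partial_{x_k}\eta_j = (-1)^{k-1}q_j^k$ for all $k\in\mathbb{Z}\setminus\{0\}$, matching the prescribed recursions for $\varphi_i^{(n)} = p_i^n e^{\xi_i}$ and $\hat\varphi_j^{(n)} = q_j^n e^{\eta_j}$. Assuming $p_i+q_j>0$, the integral in $\psi_{i,j}^{(n)}$ evaluates in closed form to $c_{i,j} + (-1)^n\frac{p_i^n q_j^{-n}}{p_i+q_j}e^{\xi_i+\eta_j}$, the standard soliton scattering kernel, and the choice $c_{i,j}=\delta_{i,j}$ reproduces the familiar $M$-soliton $\tau$-function of the $A_\infty$ 2D-Toda system.

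The main obstacle is purely combinatorial: tracking the $(-1)^n$ sign factor and the index shifts $n\mapsto n\pm 1$ through the bordered determinants so that the four minors produced by Jacobi's identity correspond precisely to $\tau_n^2$, $\tau_{n-1}\tau_{n+1}$, and the two pieces of the Hirota derivative $D_{x_1}D_{x_{-1}}\tau_n\cdot\tau_n$. The analytic content is otherwise elementary, and once the bordering conventions are fixed, extending the argument to other flows of the hierarchy amounts to recording an additional shift in the corner entry.
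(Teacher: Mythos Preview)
Your sketch is correct and follows the standard bordered-determinant/Jacobi-identity argument. The paper itself gives no proof at all for this lemma beyond the citation ``See \cite{HirotaBook}'', so there is nothing to compare: you have supplied precisely the argument one finds in Hirota's book, which is what the authors are invoking.
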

\begin{proof}
See \cite{HirotaBook}.
\end{proof}

We impose the $B_{\infty}$-reduction $\theta_{n}=\theta_{1-n}$ ($n\geq
1$) to the $A_{\infty}$ 2D-Toda system (\ref{2dToda}),
{\it i.e.},
fold the infinite sequence $\{\dots,
 \theta_{-2},\theta_{-1},\theta_0,\theta_1,\theta_2,\dots\}$
in the midpoint between $\theta_0$ and
$\theta_1$~\cite{Jimbo-Miwa,Ueno-Takasaki,Nimmo-Willox,Willox}.
Under this constraint, we have $\theta_0=\theta_1$, $\theta_{-1}=\theta_2$,
$\theta_{-2}=\theta_3$, \dots .

For $n=1$,
\begin{eqnarray*}
\frac{\partial^2\theta_1}{\partial x_1 \partial x_{-1}}
&=&e^{-\theta_{0}}-2e^{-\theta_{1}}+e^{-\theta_{2}}\\
&=&  e^{-\theta_{2}} -e^{-\theta_{1}}\\
&=& e^{-\theta_{2}} -2e^{-(\theta_{1}+\ln 2)}\,.
\end{eqnarray*}
For $n=2$,
\begin{eqnarray*}
\frac{\partial^2\theta_2}{\partial x_1 \partial x_{-1}}
&=&e^{-\theta_{1}}-2e^{-\theta_{2}}+e^{-\theta_{3}}\\
&=&2e^{-(\theta_{1}+\ln 2)}-2e^{-\theta_{2}}+e^{-\theta_{3}}\,.
\end{eqnarray*}
After redefining $\theta_1$ by $\theta_1+\ln 2\to \theta_1$, we obtain
the $B_{\infty}$ 2D-Toda system~\cite{Leznov-Saveliev,Mikhailov3}:
\begin{equation}
\frac{\partial^2\theta_n}{\partial x_1 \partial x_{-1}}
=-\sum_{m\in \mathbb{Z}_{\geq 1}}a_{n,m}e^{-\theta_m}\,,\quad {\rm for}\quad n\in \mathbb{Z}_{\geq 1}\,,\label{2dToda-B}
\end{equation}
where the matrix $A=(a_{n,m})$ is the transpose of the Cartan matrix for
the infinite dimensional Lie algebra $B_{\infty}$~\cite{Nimmo-Willox}.

The $B_{\infty}$ 2D-Toda system (\ref{2dToda-B})
is transformed into the bilinear equations
\begin{eqnarray}
&&-\left(\frac{1}{2}D_{x_1}D_{x_{-1}}-1\right)\tau_1\cdot
 \tau_1=\tau_{1}\tau_{2}\,,\label{BToda-bilinear1}\\
&&-\left(\frac{1}{2}D_{x_1}D_{x_{-1}}-1\right)\tau_n\cdot
 \tau_n=\tau_{n-1}\tau_{n+1}
\,, \quad {\rm for} \quad n\geq 2\,,\label{BToda-bilinear2}
\end{eqnarray}
through the dependent variable transformation
\begin{equation}
\theta_1=-\ln \frac{\tau_2}{\tau_1}\,,\qquad {\rm and} \qquad
\theta_n=-\ln \frac{\tau_{n+1}\tau_{n-1}}{\tau_n^2}\quad {\rm
for}\,\,\,n\geq 2.
\end{equation}

\begin{lemma}
The bilinear equations of the $B_{\infty}$ 2D-Toda system
(\ref{BToda-bilinear1}) and (\ref{BToda-bilinear2})
have
the $N$-soliton solution which is expressed as
\begin{equation}
\tau_n={\rm det}\left(
\psi_{i,j}^{(n)}\right)_{1\leq i,j \leq 2N}\,\,,
\end{equation}
where
\begin{eqnarray}
&&\psi_{i,j}^{(n)}=c_{i,j}-2(-1)^n\int_{-\infty}^{x_1}\varphi_i^{(n)}
\varphi_j^{(-n+1)}dx_1\,,\\
&&\varphi_i^{(n)}= p_i^ne^{\xi_i}\,,\quad
\xi_i=p_ix_1+\frac{1}{p_i}x_{-1}+p_i^3x_3
+\frac{1}{p_i^3}x_{-3}+\cdots +\xi_{i0}\,,\nonumber
\end{eqnarray}
and $c_{i,j}=-c_{j,i}$, $c_{i,i}=0$.
\end{lemma}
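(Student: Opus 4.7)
The plan is to obtain the $N$-soliton formula of the $B_{\infty}$ 2D-Toda system as a specialization of the $A_{\infty}$ Gram determinant recorded in Lemma~1. I would start by observing that the two families $\varphi_i^{(n)}$ and $\hat\varphi_j^{(n)}$ in Lemma~1 satisfy the same recursion $\partial_{x_k}=(\cdot)^{(n+k)}$ in the odd flows $k=\pm 1,\pm 3,\dots$, because the sign $(-1)^{k-1}$ is then equal to $1$. This means that after setting all even times $x_{\pm 2},x_{\pm 4},\dots$ to zero (the period-2 truncation of the hierarchy) the identification
\[
\hat\varphi_j^{(n)}=-2\,\varphi_j^{(n+1)}
\]
is admissible; plugging it into the formula for $\psi_{i,j}^{(n)}$ in Lemma~1 reproduces the stated integral form. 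The $(-2)$-factor is the natural one, since it matches the rescaling $\theta_1\mapsto\theta_1+\ln 2$ used in the derivation of \eqref{BToda-bilinear1}--\eqref{BToda-bilinear2} from the $A_{\infty}$ system.

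Next, I would verify that the $B_{\infty}$ folding $\theta_n=\theta_{1-n}$ corresponds, at the $\tau$-function level, to $\tau_{1-n}=\tau_n$, and that this symmetry is enforced precisely by the antisymmetry $c_{i,j}=-c_{j,i}$, $c_{i,i}=0$. A direct manipulation gives
\[
\psi_{j,i}^{(1-n)}=-c_{i,j}+2(-1)^{n}\int_{-\infty}^{x_1}\varphi_i^{(n)}\varphi_j^{(1-n)}\,dx_1=-\psi_{i,j}^{(n)},
\]
so the matrix $(\psi_{i,j}^{(1-n)})$ is the negative transpose of $(\psi_{i,j}^{(n)})$. Because $M=2N$ is even, taking determinants yields $\tau_{1-n}=\tau_{n}$; in particular $\tau_{0}=\tau_{1}$. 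The latter identity is exactly what collapses the $n=1$ instance of the $A_{\infty}$ bilinear equation $-(\tfrac12 D_{x_1}D_{x_{-1}}-1)\tau_n\cdot\tau_n=\tau_{n-1}\tau_{n+1}$ into \eqref{BToda-bilinear1}, while for $n\ge 2$ the equations \eqref{BToda-bilinear2} are already among the $A_{\infty}$ 2D-Toda bilinear identities established in Lemma~1. Therefore the proposed $\tau_n$ solves both \eqref{BToda-bilinear1} and \eqref{BToda-bilinear2}.

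The main obstacle is organizational rather than computational: the two reduction devices --- the functional identification $\hat\varphi_j^{(n)}=-2\varphi_j^{(n+1)}$ and the antisymmetric constants $c_{i,j}$ --- must be shown to be simultaneously consistent with the surviving time flows. The former forces us to restrict to the odd-time sub-hierarchy, and the latter produces the involution $\tau_{1-n}=\tau_{n}$ needed so that the truncated sequence closes under the $B_{\infty}$ equations. Once these compatibilities are in place, the remaining work is careful bookkeeping of the sign $(-1)^n$ and of the index shift $n\mapsto -n+1$ that distinguishes the $B_{\infty}$ Gram determinant from a generic odd-time specialization of the $A_{\infty}$ one.
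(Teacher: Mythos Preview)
Your proposal is correct and follows essentially the same route as the paper: specialize the $A_\infty$ Gram determinant of Lemma~1 by setting $\hat\varphi_j^{(n)}=-2\varphi_j^{(n+1)}$, $c_{i,j}=-c_{j,i}$, $M=2N$, and $x_{2k}\equiv 0$, then verify $\psi_{i,j}^{(n)}=-\psi_{j,i}^{(1-n)}$ so that $\tau_n=\tau_{1-n}$ (using evenness of $2N$), which reduces the $A_\infty$ bilinear equations to \eqref{BToda-bilinear1}--\eqref{BToda-bilinear2}. Your additional remarks explaining why the odd-time restriction is forced by the sign $(-1)^{k-1}$ and why the factor $-2$ is tied to the $\theta_1\mapsto\theta_1+\ln 2$ rescaling are helpful elaborations not spelled out in the paper, but the core argument is the same.
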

\begin{proof}
Imposing the $B_{\infty}$ reduction $\tau_n=\tau_{1-n}$, {\it i.e.}, folding the
sequence of the $\tau$-functions $\{\dots,
 \tau_{-2},\tau_{-1},\tau_0,\tau_1,\tau_2, \tau_3,\dots\}$
in the midpoint between $\tau_0$ and $\tau_1$, we have
$\tau_0=\tau_1$, $\tau_{-1}=\tau_2$, $\tau_{-2}=\tau_3$, ....
~\cite{Jimbo-Miwa,Ueno-Takasaki,Nimmo-Willox,Willox}.
Thus we obtain the bilinear equations (\ref{BToda-bilinear1})
and (\ref{BToda-bilinear2}) from the $A_\infty$
2D-Toda bilinear equation
 (\ref{2dtl-bilinear}).

To impose the $B_{\infty}$ reduction to the Gram-type determinant
 solution of the $A_{\infty}$ 2D-Toda system,
we impose the constraint
$\hat{\varphi}_j^{(n)}=-2\frac{\partial}{\partial x_1}\varphi_j^{(n)}
=-2\varphi_j^{(n+1)}$, $c_{i,j}=-c_{j,i}$,
$c_{i,i}=0$, $M=2N$ and $x_{2k}\equiv 0$ for every interger $k$.
With this constraint,
each element of the Gram-type determinant has the
following property:
\begin{eqnarray}
\fl
\psi_{i,j}^{(n)}&=&c_{i,j}-2(-1)^n\int_{-\infty}^{x_1}\varphi_i^{(n)}
\varphi_j^{(-n+1)}dx_1
=-c_{j,i}+2(-1)^{1-n}
\int_{-\infty}^{x_1}\varphi_j^{(-n+1)}\varphi_i^{(n)}dx_1\nonumber\\
\fl &=&-\psi_{j,i}^{(1-n)}\,.
\end{eqnarray}
Then the $\tau$-function satisfies
$\tau_n=\tau_{1-n}$.
Therefore the $N$-soliton solution of the $B_{\infty}$ 2D-Toda lattice
is expressed by the above Gram-type determinant.
\end{proof}

\begin{definition}
Let $A=(a_{i,j})_{1\leq i,j\leq 2N}$ be a $2N\times 2N$ skew-symmetric
 matrix.
The pfaffian of $A$ is defined by
\begin{eqnarray*}
\fl {\rm pf}(A)&=&{\rm pf}(a_{i,j})_{1\leq i,j\leq 2N}
=
{\rm pf}(1,2,...,2N)
=\sum_{
{\small \begin{array}{c}
i_1<i_2<\cdots <i_N\\
i_1<j_1,...,i_N<j_N
\end{array}
}
}{\rm sgn}
(\pi)
a_{i_1,j_1}a_{i_2,j_2}
\cdots a_{i_N,j_N}\,\,,
\end{eqnarray*}
where $\pi=\left(
{\small
\begin{array}{ccccc}
1 & 2 & \cdots & 2N-1 & 2N\\
i_1 & j_1 & \cdots & i_N & j_N
\end{array}
}
\right)$
is a permutation of $\{1,2,...,2N-1,2N\}$.
\end{definition}
The pfaffian can be computed recursively by
\[
\fl {\rm pf}(1,2,...,2N-1,2N)=\sum_{i=2}^{2N}(-1)^i
{\rm pf}(1,i){\rm pf}(2,3,...,i-1,i+1,...,2N-1,2N)\,.
\]
For a skew-symmetric matrix $A$, we have
$[{\rm pf}(A)]^2={\rm det}(A)$.

\begin{lemma}\label{lemma:BToda-pfaff}
The $\tau$-function
$\tau_1$ of the bilinear equations (\ref{BToda-bilinear1}) and
(\ref{BToda-bilinear2}) is written in the form of pfaffian
\begin{equation}
\tau_1=\tau^2\,,\qquad \tau= {\rm pf}(1,2,...,2N-1,2N)\,,
\end{equation}
where ${\rm pf}(i,j)=c_{i,j}+\int_{-\infty}^{x_1}D_{x_1}\varphi_i^{(0)}
\cdot \varphi_j^{(0)}dx_1$,
$\varphi_i^{(n)}= p_i^ne^{\xi_i}$,
$\xi_i=p_i^{-1}x_{-1}+p_ix_1+p_i^3x_3+\frac{1}{p_i^3}x_{-3}+\cdots+\xi_i^0$,
$c_{i,j}=-c_{j,i}$, $c_{i,i}=0$.
\end{lemma}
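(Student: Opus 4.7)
The plan is to show that the matrix $\Psi^{(1)}=(\psi_{i,j}^{(1)})_{1\le i,j\le 2N}$ differs from the skew-symmetric matrix $P=(\mathrm{pf}(i,j))_{1\le i,j\le 2N}$ by a rank-one symmetric correction, then apply the matrix determinant lemma together with the classical identity $\det(\text{skew})=[\mathrm{pf}]^2$.

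First I would rewrite the entries of $\Psi^{(1)}$ in terms of $\varphi_i^{(0)}$ alone. Using $\varphi_i^{(1)}=\partial_{x_1}\varphi_i^{(0)}$ and the decay $\varphi_i^{(0)}\to 0$ as $x_1\to-\infty$ (since $\xi_i=p_ix_1+\cdots$), integration by parts yields
\[
\int_{-\infty}^{x_1}\varphi_i^{(0)}\varphi_j^{(1)}\,dx_1
=\varphi_i^{(0)}\varphi_j^{(0)}-\int_{-\infty}^{x_1}\varphi_i^{(1)}\varphi_j^{(0)}\,dx_1,
\]
so that
\[
\int_{-\infty}^{x_1}D_{x_1}\varphi_i^{(0)}\!\cdot\varphi_j^{(0)}\,dx_1
=2\int_{-\infty}^{x_1}\varphi_i^{(1)}\varphi_j^{(0)}\,dx_1-\varphi_i^{(0)}\varphi_j^{(0)}.
\]
Comparing with the definition of $\psi_{i,j}^{(1)}$ in Lemma~2.2, this gives the clean identity
\[
\psi_{i,j}^{(1)}=\mathrm{pf}(i,j)+\varphi_i^{(0)}\varphi_j^{(0)}.
\]
Skew-symmetry of $\mathrm{pf}(i,j)$ in $(i,j)$ is immediate from $D_{x_1}g\cdot f=-D_{x_1}f\cdot g$ together with $c_{i,j}=-c_{j,i}$.

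Next, setting $v=(\varphi_1^{(0)},\dots,\varphi_{2N}^{(0)})^T$, I would express the preceding identity as the matrix equation $\Psi^{(1)}=P+vv^T$ with $P^T=-P$. For generic parameters $P$ is invertible, and $(P^{-1})^T=-P^{-1}$, so the matrix determinant lemma gives
\[
\det\Psi^{(1)}=\det(P+vv^T)=\det(P)\bigl(1+v^T P^{-1}v\bigr)=\det(P),
\]
because any quadratic form of a skew-symmetric matrix vanishes on the diagonal. Since $\det(P)=[\mathrm{pf}(P)]^2$, this yields $\tau_1=\det\Psi^{(1)}=\tau^2$, as claimed. The singular locus of $P$ is a proper analytic subvariety in the parameter space $(c_{i,j},p_i,\xi_{i0})$, so by continuity the identity extends there as well.

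The main obstacle is the first step: one has to track the boundary term from integration by parts carefully so that the $\varphi_i^{(0)}\varphi_j^{(0)}$ contribution appears with the correct sign and coefficient to convert the Gram-type entry $\psi_{i,j}^{(1)}$ into $\mathrm{pf}(i,j)+\varphi_i^{(0)}\varphi_j^{(0)}$. Once this decomposition is secured, the reduction to $\det(P)=[\mathrm{pf}(P)]^2$ is essentially automatic via the matrix determinant lemma.
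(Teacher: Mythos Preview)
Your argument is correct and follows essentially the same route as the paper's (intended) proof: the integration-by-parts identity $\psi_{i,j}^{(1)}=\mathrm{pf}(i,j)+\varphi_i^{(0)}\varphi_j^{(0)}$ is exactly what the paper derives, and then one removes the rank-one symmetric correction to reduce to $\det(P)=[\mathrm{pf}(P)]^2$. The only cosmetic difference is that the paper invokes Hirota's identity $\det(a_{i,j}-y_iy_j)=\det(a_{i,j})$ for skew-symmetric $(a_{i,j})$ directly (valid without assuming invertibility), whereas you reach the same conclusion via the matrix determinant lemma plus a continuity argument on the singular locus.
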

\begin{proof}
See \cite{HirotaBook,Nimmo-Ruijsenaars}.
\end{proof}

\begin{lemma}\label{lemma:3BToda}
The bilinear equations
\begin{eqnarray}
&&-\left(\frac{1}{2}D_{x_1}D_{x_{-1}}-1\right)\tau_{1} \cdot
 \tau_{1}=\tau_{1}\tau_{2}\,,\label{3BToda-bilinear1}\\
&&-\left(\frac{1}{2}D_{x_1}D_{x_{-1}}-1\right)\tau_{2}\cdot \tau_{2}=\tau_1^2\,,\label{3BToda-bilinear2}
\end{eqnarray}
have
the $N$-soliton solution which is expressed as
\begin{equation}
\tau_n={\rm det}\left(
\psi_{i,j}^{(n)}\right)_{1\leq i,j \leq 2N}\,\,,
\end{equation}
where
\begin{eqnarray}
&&\psi_{i,j}^{(n)}=c_{i,j}+\frac{2p_i}{p_i+p_j}\left(
-\frac{p_{i}}{p_j}\right)^{n-1}
e^{\xi_i+\xi_j}
\,,\qquad \xi_i=p_ix_1+\frac{1}{p_i}x_{-1}+\xi_{i0}\,,\nonumber
\end{eqnarray}
and $c_{i,j}=\delta_{j,2N+1-i}c_i$\,, $c_i=-c_{2N+1-i}$\,,
$p_i^2-p_ip_{2N+1-i}+p_{2N+1-i}^2=0$\,.
\end{lemma}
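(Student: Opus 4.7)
The plan is to derive this $N$-soliton formula by specializing the $B_\infty$ 2D-Toda solution of Lemma 2.3 to two independent times and then imposing the period-3 reduction on the resulting determinant. First, since (\ref{3BToda-bilinear1})--(\ref{3BToda-bilinear2}) involve only $x_1,x_{-1}$, restrict to $\xi_i=p_ix_1+p_i^{-1}x_{-1}+\xi_{i0}$. Because $\partial_{x_1}e^{\xi_i+\xi_j}=(p_i+p_j)e^{\xi_i+\xi_j}$, the integral in the matrix entry from Lemma 2.3 evaluates to
\[
\int_{-\infty}^{x_1}\varphi_i^{(n)}\varphi_j^{(-n+1)}\,dx_1
=\frac{p_i^n p_j^{-n+1}}{p_i+p_j}\,e^{\xi_i+\xi_j},
\]
and substituting into $\psi_{i,j}^{(n)}=c_{i,j}-2(-1)^n\int\cdots$ yields exactly the closed form claimed in the statement. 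One also checks that the specialization $c_{i,j}=c_i\delta_{j,2N+1-i}$ is a legitimate skew-symmetric choice: $c_{i,j}=-c_{j,i}$ reduces to $c_i=-c_{2N+1-i}$, and $c_{i,i}=0$ is automatic since $i\neq 2N+1-i$ for integer $i$.

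Next, to pass from the $B_\infty$ equation (\ref{BToda-bilinear2}) at $n=2$, which gives $\tau_1\tau_3$ on the right, to the claimed (\ref{3BToda-bilinear2}) with $\tau_1^2$ on the right, the $\tau$-functions must satisfy the period-3 reduction $\tau_{n+3}=\tau_n$. Combined with the $B_\infty$ folding $\tau_0=\tau_1$, this yields $\tau_3=\tau_0=\tau_1$, so the right-hand side collapses to $\tau_1^2$. Thus the task reduces to proving that the above closed form, under the cubic hypothesis on the $p_i$, is period-3 in $n$.

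To establish this, apply the scaling $\psi_{i,j}^{(n)}\mapsto\tilde\psi_{i,j}^{(n)}:=(-p_i)^{-(n-1)}p_j^{\,n-1}\psi_{i,j}^{(n)}$. The row/column prefactors contribute the overall determinant factor
\[
\prod_{i=1}^{2N}(-p_i)^{-(n-1)}\prod_{j=1}^{2N}p_j^{\,n-1}=(-1)^{-2N(n-1)}=1,
\]
so $\det(\tilde\psi_{i,j}^{(n)})=\tau_n$. The exponential part of $\tilde\psi_{i,j}^{(n)}$ becomes $n$-independent, equal to $\tfrac{2p_i}{p_i+p_j}e^{\xi_i+\xi_j}$, while the constant part becomes $(-p_i)^{-(n-1)}p_{2N+1-i}^{\,n-1}c_i\delta_{j,2N+1-i}$. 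Under $n\to n+3$ this acquires the factor $(-p_i)^{-3}p_{2N+1-i}^{\,3}=-p_{2N+1-i}^3/p_i^3$, which must equal $1$, i.e.\ $p_i^3+p_{2N+1-i}^3=0$. Factoring
\[
p_i^3+p_{2N+1-i}^3=(p_i+p_{2N+1-i})\bigl(p_i^2-p_ip_{2N+1-i}+p_{2N+1-i}^2\bigr),
\]
and noting that the first factor cannot vanish (otherwise the second would equal $3p_i^2\neq 0$), we obtain precisely the hypothesis $p_i^2-p_ip_{2N+1-i}+p_{2N+1-i}^2=0$. Hence $\tilde\psi_{i,j}^{(n+3)}=\tilde\psi_{i,j}^{(n)}$ entrywise, so $\tau_{n+3}=\tau_n$, completing the verification.

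The main obstacle is the careful bookkeeping for the row/column multipliers: one has to choose scaling factors whose combined determinantal contribution is $1$ while simultaneously stripping $n$-dependence from the exponential part, so that the residual $n$-dependence lives entirely in the $c$-part and translates cleanly into the cubic relation. Once the scaling is identified, the cubic hypothesis is forced as the non-degenerate branch of $p_i^3+p_{2N+1-i}^3=0$, which is the characteristic period-3 condition of the $A_2^{(2)}$ reduction (recognizing that $p_{2N+1-i}/p_i$ is then a primitive sixth root of unity whose cube is $-1$).
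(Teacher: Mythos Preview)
Your proof is correct and follows essentially the same route as the paper: both evaluate the integral from Lemma~2.2 to obtain the closed form for $\psi_{i,j}^{(n)}$, and then establish $\tau_{n+3}=\tau_n$ by a row/column rescaling of the determinant that pushes all $n$-dependence onto the $c$-part, where it becomes a power of $-p_{2N+1-i}/p_i$ and hence has period~3 under the cubic constraint $p_i^3+p_{2N+1-i}^3=0$. The only cosmetic difference is that the paper factors $e^{\xi_i}$ and $e^{\xi_j}$ out of rows and columns (yielding an $n$-independent prefactor $e^{2\sum_k\xi_k}$), whereas you scale by powers of $p_i,p_j$ with unit overall determinant; these are equivalent manipulations.

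One minor logical slip: your parenthetical that the linear factor $p_i+p_{2N+1-i}$ ``cannot vanish (otherwise the second would equal $3p_i^2\neq 0$)'' only shows the two factors are not simultaneously zero, not that the linear one is forbidden. This does not affect the argument, however, since the direction you actually need is that the quadratic hypothesis implies the cubic relation and hence periodicity; the exclusion $p_i\neq -p_{2N+1-i}$ is a separate nondegeneracy assumption (stated explicitly in the paper), not a consequence.
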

\begin{proof}
Imposing a period 3-reduction
$\tau_n=\tau_{n+3}$ to the sequence of the $\tau$-functions
$\{\dots ,\tau_3
 ,\tau_2,\tau_1,\tau_1,\tau_2,\tau_3,\dots\}$,
it becomes $\{\dots ,\tau_1
 ,\tau_2,\tau_1,\tau_1,\tau_2,\tau_1,\dots\}$~
\cite{Jimbo-Miwa,Ueno-Takasaki,Nimmo-Willox,Willox}.
Thus we obtain two bilinear equations (\ref{3BToda-bilinear1}) and
(\ref{3BToda-bilinear2}).

In Lie algebraic terms, this corresponds to the reduction
to the affine Lie algebra $A_2^{(2)}$ from the infinite
 dimensional
Lie algebra
 $B_{\infty}$~\cite{Jimbo-Miwa,Ueno-Takasaki,Nimmo-Willox,Willox}.

To impose a period 3-reduction to the $N$-soliton solution,
we add a constraint $p_i^3+p_{2N+1-i}^3=0$
($p_i\neq -p_{2N+1-i}$) for $i=1,2,\cdots , N$
and $c_{i,j}=\delta_{j,2N+1-i}\,c_i$, $c_i=-c_{2N+1-i}$.
To show $\tau_{n+3}=\tau_n$, we manipulate $\tau_n$ as follows:
\begin{eqnarray*}
\fl
\tau_n&=&{\rm det}\left(\delta_{j,2N+1-i}c_i+\frac{2p_i}{p_i+p_j}\left(
-\frac{p_i}{p_j}\right)^{n-1}e^{\xi_i+\xi_j}\right)_{1\leq i,j\leq 2N}\\
\fl &=&e^{2\sum_{k=1}^{2N}\xi_k}\,
{\rm det}\left(\delta_{j,2N+1-i}c_i\left(
-\frac{p_j}{p_i}\right)^{n-1}e^{-\xi_i-\xi_j}+
\frac{2p_i}{p_i+p_j}\right)_{1\leq
i,j\leq 2N}\\
\fl &=&e^{2\sum_{k=1}^{2N}\xi_k}\,
{\rm det}\left(\delta_{j,2N+1-i}c_i\left(
-\frac{p_{2N+1-i}}{p_i}\right)^{n-1}
e^{-\xi_i-\xi_{2N+1-i}}+\frac{2p_i}{p_i+p_j}\right)_{1\leq
i,j\leq 2N}\,.
\end{eqnarray*}
Since $\left(
-\frac{p_{2N+1-i}}{p_i}\right)^3=1$,
the relation $\tau_{n+3}=\tau_n$ is satisfied.
\end{proof}

Letting $f=\tau_{1}$ and $g=\tau_{2}$,
we obtain the bilinear equations
\begin{eqnarray}
&&-\left(\frac{1}{2}D_{x_1}D_{x_{-1}}-1\right)f\cdot
 f=fg\,,\label{bilinear-1}\\
&&-\left(\frac{1}{2}D_{x_1}D_{x_{-1}}-1\right)g\cdot g=f^2\,.
\label{bilinear-2}
\end{eqnarray}

\begin{corollary}\label{corollary:tau}
The $\tau$-function
$f=\tau_1$ of the bilinear equations (\ref{bilinear-1}) and
(\ref{bilinear-2}) is written in the form of pfaffian
\begin{equation}
f=\tau_1=\tau^2\,,\qquad \tau= {\rm pf}(1,2,...,2N-1,2N)\,,
\end{equation}
where
\begin{eqnarray}
&&{\rm pf}(i,j)=c_{i,j}
+\frac{p_i-p_j}{p_i+p_j}e^{\xi_i+\xi_{j}}
\,,\qquad \xi_i=p_ix_1+\frac{1}{p_i}x_{-1}+\xi_{i0}\,,\nonumber
\end{eqnarray}
and $c_{i,j}=\delta_{j,2N+1-i}c_i$\,, $c_i=-c_{2N+1-i}$\,,
$p_i^2-p_ip_{2N+1-i}+p_{2N+1-i}^2=0$\,.
\end{corollary}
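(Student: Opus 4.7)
The plan is to combine the two preceding lemmas. Lemma~\ref{lemma:BToda-pfaff} already tells us that the first $\tau$-function of the $B_\infty$ 2D-Toda hierarchy is the square of a pfaffian whose entries are given by the integral
\[
{\rm pf}(i,j)=c_{i,j}+\int_{-\infty}^{x_1} D_{x_1}\varphi_i^{(0)}\cdot\varphi_j^{(0)}\,dx_1,\qquad \varphi_i^{(0)}=e^{\xi_i}.
\]
What remains is (i) to evaluate the integral explicitly when $\varphi_i^{(n)}=p_i^n e^{\xi_i}$, and (ii) to confirm that after imposing the period 3-reduction of Lemma~\ref{lemma:3BToda}, the pfaffian form descends to the reduced system \eqref{bilinear-1}--\eqref{bilinear-2}.

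For step (i) I would simply compute
\[
D_{x_1}\varphi_i^{(0)}\cdot\varphi_j^{(0)}=(p_i-p_j)\,e^{\xi_i+\xi_j},
\]
so that, since $\partial_{x_1}(\xi_i+\xi_j)=p_i+p_j$,
\[
\int_{-\infty}^{x_1}(p_i-p_j)\,e^{\xi_i+\xi_j}\,dx_1=\frac{p_i-p_j}{p_i+p_j}\,e^{\xi_i+\xi_j}.
\]
This produces exactly the pfaffian entry stated in the corollary. The constants $c_{i,j}$ inherit the skew-symmetric requirement $c_{i,j}=-c_{j,i}$, $c_{i,i}=0$ from Lemma~\ref{lemma:BToda-pfaff}.

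For step (ii) I would recall from Lemma~\ref{lemma:3BToda} that, to fold the $B_\infty$ hierarchy down to $A_2^{(2)}$, one imposes the period 3-reduction conditions
\[
p_i^3+p_{2N+1-i}^3=0\quad (p_i\neq-p_{2N+1-i}),\qquad c_{i,j}=\delta_{j,2N+1-i}\,c_i,\qquad c_i=-c_{2N+1-i}.
\]
These are compatible with the pfaffian framework: the skew-symmetry of $c_{i,j}$ required for the pfaffian is precisely the condition $c_i=-c_{2N+1-i}$, and the cubic constraint on $p_i$ factorises as $p_i^2-p_ip_{2N+1-i}+p_{2N+1-i}^2=0$ under $p_i\neq-p_{2N+1-i}$, giving the form stated in the corollary. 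Since bilinear equations \eqref{bilinear-1}--\eqref{bilinear-2} are obtained from the $B_\infty$ bilinear equations \eqref{BToda-bilinear1}--\eqref{BToda-bilinear2} by this same 3-reduction (with $f=\tau_1$, $g=\tau_2$, and $\tau_3=\tau_1$), the pfaffian form of $\tau_1$ established in Lemma~\ref{lemma:BToda-pfaff} automatically solves \eqref{bilinear-1}--\eqref{bilinear-2} under these reduction constraints.

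The mildly delicate step is the compatibility check at the end: one must verify that the additional reduction constraints do not spoil the identity $\tau_1=\tau^2$ established in Lemma~\ref{lemma:BToda-pfaff}. This follows because the reduction is imposed at the level of the seed functions $\varphi_i^{(n)}$ and the coupling matrix $c_{i,j}$, both of which enter the pfaffian and the Gram determinant in a parallel manner, so the algebraic identity $[{\rm pf}(A)]^2={\rm det}(A)$ for skew-symmetric $A$ propagates unchanged to the reduced case. With these observations the corollary follows directly from Lemmas~\ref{lemma:BToda-pfaff} and~\ref{lemma:3BToda}.
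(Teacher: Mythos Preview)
Your proof is correct and follows exactly the same approach as the paper, which simply states that the corollary follows from Lemmas~\ref{lemma:BToda-pfaff} and~\ref{lemma:3BToda}. You have merely supplied the explicit evaluation of the integral in Lemma~\ref{lemma:BToda-pfaff} and spelled out the compatibility of the period 3-reduction with the pfaffian structure, details the paper leaves implicit.
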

\begin{proof}
Using Lemma \ref{lemma:BToda-pfaff} and \ref{lemma:3BToda}, we obtain
 the above formula.
\end{proof}

Let $k_i=p_i+p_{2N+1-i}$. From
$p_i^2-p_ip_{2N+1-i}+p_{2N+1-i}^2=0$, we obtain
$p_i=\frac{1}{6}(3k_i+{\rm i}\sqrt{3}k_i)$,
$p_{2N+1-i}=\frac{1}{6}(3k_i-{\rm i}\sqrt{3}k_i)$,
$p_ip_{2N+1-i}=\frac{k_i^2}{3}$ and
$\frac{1}{p{i}}+\frac{1}{p_{2N+1-i}}=\frac{3}{k_i}$.
Thus
\[
\xi_i+\xi_{2N+1-i}
=k_ix_1+\frac{3}{k_i}x_{-1}+\xi_{i0}+\xi_{{2N+1-i}0}\,.
\]
In the pfaffian solution, all phase functions can be expressed by the
summation of $\xi_i+\xi_{2N+1-i}$. So the phase functions can be
expressed by the parameters $\{k_i\}$ ($i=1,2,\cdots,N$).
Each coefficient of exponential functions can be normalized to 1 after
absorption into phase constants or can be rewritten by the parameters
$\{k_i\}$.
Thus it is possible to rewrite the above $\tau$-function by using the
parameters $\{k_i\}$ instead of $\{p_i\}$.\vspace{0.2in}

\noindent{\bf Examples}:\\
For $N=1$,
\begin{equation}
\fl \tau={\rm pf}(1,2)=c_1+\frac{p_1-p_2}{p_1+p_2}e^{\xi_1+\xi_{2}}\,.
\end{equation}
Letting $c_1=1$ and
$e^{\gamma_1}=\frac{p_1-p_2}{p_1+p_2}$,
the $\tau$-function can be rewritten as
\begin{equation}
\fl \tau={\rm pf}(1,2)=1+e^{\xi_1+\xi_{2}+\gamma_1}\,,
\end{equation}
where
\[
\xi_1+\xi_{2}
=k_1x_1+\frac{3}{k_1}x_{-1}+\xi_{10}+\xi_{{2}0}\,.
\]

\noindent
For $N=2$,
\begin{eqnarray*}
\fl \tau&=&{\rm pf}(1,2,3,4)
={\rm pf}(1,2){\rm pf}(3,4)-{\rm pf}(1,3){\rm pf}(2,4)
+{\rm pf}(1,4){\rm pf}(2,3)\\
\fl &=&\frac{p_1-p_2}{p_1+p_2}e^{\xi_1+\xi_{2}}
 \frac{p_3-p_4}{p_3+p_4}e^{\xi_3+\xi_{4}}
-\frac{p_1-p_3}{p_1+p_3}e^{\xi_1+\xi_{3}}
 \frac{p_2-p_4}{p_2+p_4}e^{\xi_2+\xi_{4}}\\
\fl &&\qquad +\left(c_1+\frac{p_1-p_4}{p_1+p_4}e^{\xi_1+\xi_{4}}\right)
\left(c_2+\frac{p_2-p_3}{p_2+p_3}e^{\xi_2+\xi_{3}}\right)\\
\fl &=&c_1c_2+c_2\frac{p_1-p_4}{p_1+p_4}e^{\xi_1+\xi_{4}}
+c_1\frac{p_2-p_3}{p_2+p_3}e^{\xi_2+\xi_{3}}\\
\fl &&\quad +\left(\frac{p_1-p_2}{p_1+p_2}
\frac{p_3-p_4}{p_3+p_4}
-\frac{p_1-p_3}{p_1+p_3}\frac{p_2-p_4}{p_2+p_4}
+\frac{p_1-p_4}{p_1+p_4}\frac{p_2-p_3}{p_2+p_3}
\right)e^{\xi_1+\xi_{2}+\xi_3+\xi_{4}}\,.
\end{eqnarray*}
Letting $c_1=c_2=1$ and $e^{\gamma_1}=\frac{p_1-p_4}{p_1+p_4}$ and
$e^{\gamma_2}=\frac{p_2-p_3}{p_2+p_3}$, the above $\tau$-function
becomes
\begin{eqnarray*}
\tau&=&1+e^{\xi_1+\xi_{4}+\gamma_1}
+e^{\xi_2+\xi_{3}+\gamma_2}
+b_{12}e^{\xi_1+\xi_{2}+\xi_3+\xi_{4}+\gamma_1+\gamma_2}\,,
\end{eqnarray*}
where
\begin{eqnarray*}
\fl b_{12}&=&
\frac{p_1-p_2}{p_1+p_2}
\frac{p_3-p_4}{p_3+p_4}
\frac{p_1+p_4}{p_1-p_4}\frac{p_2+p_3}{p_2-p_3}
-\frac{p_1-p_3}{p_1+p_3}\frac{p_2-p_4}{p_2+p_4}
\frac{p_1+p_4}{p_1-p_4}\frac{p_2+p_3}{p_2-p_3}
+1\\
\fl &=&\frac{(k_1-k_2)^2(k_1^2-k_1k_2+k_2^2)}
{(k_1+k_2)^2(k_1^2+k_1k_2+k_2^2)}\,,
\end{eqnarray*}
\[
\fl \xi_1+\xi_{4}
=k_1x_1+\frac{3}{k_1}x_{-1}+\xi_{10}+\xi_{{4}0}\,,\quad
\xi_2+\xi_{3}
=k_2x_1+\frac{3}{k_2}x_{-1}+\xi_{20}+\xi_{{3}0}\,.
\]

The $N$-soliton solution is written in the following form:
\begin{eqnarray}
\fl &&\qquad \tau=\sum_{\mu=0,1}\exp\left[\sum_{i=1}^N\mu_i\eta_i
+\sum_{i<j}^{(N)}\mu_i\mu_j\ln b_{ij} \right]\,,\\
\fl &&\qquad  b_{ij}=
\frac{(k_i-k_j)^2(k_i^2-k_ik_j+k_j^2)}{(k_i+k_j)^2(k_i^2+k_ik_j+k_j^2)}\,
\quad {\rm for}\quad
i<j\,,
\quad \eta_i=\xi_i+\xi_{2N+1-i}
=k_ix_1+\frac{3}{k_i}x_{-1}+\eta_{i0}\,,\nonumber
\end{eqnarray}
where $\displaystyle \sum_{\mu=0,1}$ means the summation over all possible
combinations of $\mu_i=0$ or $1$ for $i=1,2,\cdots,N$,
and $\displaystyle \sum_{i<j}^{(N)}$ means the summation over all possible
combinations
of $N$ elements under the condition $i<j$.
This is consistent with the results in \cite{Vakhnenko2,Vakhnenko3}.

\begin{lemma}\label{lemma:hodograph}
The $\tau$-function $f$ of (\ref{bilinear-1}) and
 (\ref{bilinear-2})
gives the solution of the reduced Ostrovsky equation
through the dependent variable transformation
\[
u=-(\ln f)_{x_1x_1}=-2(\ln \tau)_{x_1x_1}\,,
\]
and the hodograph (reciprocal) transformation
\begin{equation}
\left\{
\begin{array}{c}
x=x_{-1}+\int_{-\infty}^{x_1}u(x_1',x_{-1})dx_1'\\
\quad =x_{-1}-(\ln f)_{{x_1}}\,,\qquad \qquad \\
t=x_1 \,.\qquad \qquad \qquad \qquad
\end{array}
\right.\label{hodograph}
\end{equation}
\end{lemma}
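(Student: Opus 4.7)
The strategy is to make the hodograph change of variables explicit, exploit a key cancellation to simplify the reduced Ostrovsky equation, and reduce the remaining assertion to a bilinear identity on $\tau$. Setting $s=x_1$ and $y=x_{-1}$, the map $(s,y)\mapsto(t,x)=(s,\,y-(\ln f)_s)$ has Jacobian $\partial x/\partial y=1-(\ln f)_{sy}$, which by (\ref{bilinear-1}) equals $g/f$. Inverting the map and writing $u=-(\ln f)_{ss}$ yields
\[
\partial_x=\frac{f}{g}\,\partial_y,\qquad \partial_t=\partial_s-\frac{uf}{g}\,\partial_y,
\]
so the two $\partial_y$-contributions cancel and $\partial_t+u\,\partial_x$ reduces to $\partial_s$ on any function of $(s,y)$. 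In particular $(\partial_t+u\partial_x)u=u_s$, and the reduced Ostrovsky equation becomes $u_{sy}=3u(g/f)$.

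Next, differentiating (\ref{bilinear-1}) once in $s$ yields $u_y=(g/f)_s$, hence $u_{sy}=(g/f)_{ss}$. Substituting $u=-(\ln f)_{ss}$ and clearing denominators, the remaining PDE is equivalent to the bilinear identity
\[
f^{2}g_{ss}+2fg\,f_{ss}-2f f_s g_s-g f_s^{2}=0.\qquad(\diamond)
\]

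The main obstacle is proving $(\diamond)$. It is not a formal consequence of (\ref{bilinear-1}) and (\ref{bilinear-2}) alone, which constrain only $(\ln f)_{sy}$ and $(\ln g)_{sy}$ and therefore leave the $s$-dynamics underdetermined. Using $f=\tau^{2}$ (Corollary \ref{corollary:tau}) and, from (\ref{bilinear-1}), $g=\tau^{2}-D_s D_y\tau\cdot\tau$, a direct algebraic reduction shows that $(\diamond)$ is equivalent to the single bilinear equation
\[
(D_{x_1}^{3}D_{x_{-1}}-3D_{x_1}^{2})\,\tau\cdot\tau=0
\]
on the pfaffian $\tau$. This is precisely the extra bilinear equation of the 3-reduced extended BKP hierarchy (the negative Sawada--Kotera relation) announced in the introduction; I would establish it either by a Jacobi--Pl\"ucker-type pfaffian identity applied to the explicit $\tau$ of Corollary \ref{corollary:tau}, or by invoking that the period-3 reduction forces $\partial_{x_3}\tau=0$ and collapsing the standard 2D-Toda bilinear equation linking $D_{x_3}$ to $D_{x_1}^{3}$ down to the displayed relation. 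With $(\diamond)$ in hand, unwinding the preceding manipulations yields the reduced Ostrovsky equation.
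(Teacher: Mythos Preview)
Your argument is correct but takes a different route from the paper's proof of this lemma. The paper does \emph{not} reduce to $(\diamond)$ or invoke the BKP bilinear equation here. Instead it works forward in the $(x,t)$ variables: from both (\ref{bilinear-1}) and (\ref{bilinear-2}) it derives the Tzitz\'eica equation $-(\ln\rho)_{x_1x_{-1}}=1/\rho-\rho^2$ for $\rho=f/g$, and combines this with $(\ln\rho)_{x_1}=-\rho u_{x_{-1}}$ (the $x_1$-derivative of (\ref{bilinear-1})). Under the hodograph these become $\rho^3=1-u_{xx}$ and $(\partial_t+u\partial_x)\ln\rho=-u_x$, which together give the DP short-wave form $(\partial_t+u\partial_x)(1-u_{xx})=-3u_x(1-u_{xx})$, i.e.\ the $x$-derivative of the reduced Ostrovsky equation. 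This uses only (\ref{bilinear-1}) and (\ref{bilinear-2}) and never needs $f=\tau^2$.

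Your observation that $(\diamond)$ is not a formal consequence of (\ref{bilinear-1})--(\ref{bilinear-2}) is genuinely correct (the gauge $f\mapsto e^{a(x_1)}f$, $g\mapsto e^{a(x_1)}g$ preserves both bilinears but shifts $u$ by $-a''(x_1)$, breaking $u_{sy}=3u(g/f)$), and the extra identity you isolate is exactly the 3-reduced BKP relation that the paper establishes separately in \S2.2 (Lemmas~\ref{lemma:pfaffian-BKP}--\ref{lemma:hodograph-BKP}). In effect you have rediscovered the paper's \emph{second} derivation rather than its first. The trade-off: the paper's Tzitz\'eica route is self-contained within the two Toda bilinears but lands on the differentiated equation; your route hits the undifferentiated reduced Ostrovsky directly at the cost of importing the pfaffian structure and the BKP identity.
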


\begin{proof}
The bilinear equation (\ref{bilinear-1}) is rewritten as
\begin{equation}
-(\ln f)_{x_1x_{-1}}=\frac{g}{f}-1\,.\label{bilinear-1-2}
\end{equation}
Let
\begin{equation}
\rho=\frac{f}{g}=\frac{1}{1-(\ln f)_{x_1x_{-1}}}\,,\quad
u=-(\ln f)_{x_1x_{1}}\,.
\end{equation}
Then (\ref{bilinear-1-2}) becomes
\begin{equation}
u_{x_{-1}}=\left(\frac{1}{\rho}\right)_{x_1}\,.\label{u-rho-relation}
\end{equation}
This is rewritten as
\begin{equation}
(\ln \rho)_{x_1}=-\rho u_{x_{-1}}\,.\label{rho-u}
\end{equation}

The bilinear equations (\ref{bilinear-1}) and (\ref{bilinear-2})
are written as
\begin{eqnarray}
&&-(\ln f)_{x_1x_{-1}}+1=\frac{1}{\rho}\,,\label{ln-bi-1}\\
&&-(\ln g)_{x_1x_{-1}}+1=\rho^2\,.\label{ln-bi-2}
\end{eqnarray}
Subtracting (\ref{ln-bi-2}) from (\ref{ln-bi-1}), we obtain
\begin{equation}
-(\ln \rho)_{x_1x_{-1}}=\frac{1}{\rho}-\rho^2\,,\label{tzitzeica}
\end{equation}
which leads to
\begin{equation}
\rho^3=1+\rho (\ln \rho)_{x_{1}x_{-1}}\,.
\end{equation}
Using (\ref{rho-u}), it becomes
\begin{equation}
\rho^3=1-\rho (\rho u_{x_{-1}})_{x_{-1}}\,.\label{rho-relation}
\end{equation}

Let us consider the hodograph (reciprocal) transformation
\begin{equation}
\left\{
\begin{array}{c}
x=x_{-1}+\int_{-\infty}^{x_1}u(x_1',x_{-1})dx_1'\\
\quad =x_{-1}-(\ln f)_{{x_1}}\,,\qquad \qquad \\
t=x_1 \,.\qquad \qquad \qquad \qquad
\end{array}
\right.\label{hodograph}
\end{equation}
This yield
\begin{equation}
\left\{
\begin{array}{c}
\frac{\partial x}{\partial x_{-1}}=1-(\ln
 f)_{x_1x_{-1}}\,,\\
\frac{\partial x}{\partial x_{1}}=-(\ln f)_{x_1x_{1}}\,,\quad \\
\end{array}
\right.
\end{equation}
and
\begin{equation}
\left\{
\begin{array}{c}
\partial_{x_{-1}}=\frac{1}{\rho}\partial_x\,,\quad\\
\partial_{x_{1}}=\partial_t+u\partial_x\,.\\
\end{array}
\right.\label{hodograph-derivative}
\end{equation}
Applying the hodograph (reciprocal)
transformation to (\ref{rho-u}) and
(\ref{rho-relation}),
we obtain
\begin{equation}
\left\{
\begin{array}{c}
(\partial_t+u\partial_x)\ln\rho=-u_x\,,\\
\rho^3=1-u_{xx}\,.\qquad \quad
\end{array}
\right.\label{vakhnenko-form2}
\end{equation}
This is equivalent to
\begin{equation}
(\partial_t+u\partial_x)\ln(1-u_{xx})=-3u_x\,,\label{short-DP-eq3}
\end{equation}
which can be written as
\begin{equation}
(\partial_t+u\partial_x)(1-u_{xx})=-3u_x(1-u_{xx})\,.\label{short-DP-eq2}
\end{equation}
This is nothing but the short wave limit of the DP equation
which is equivalent to the reduced Ostrovsky equation.
\end{proof}

\begin{remark}
Setting $\rho=\frac{1}{\mathcal{H}}$ in (\ref{tzitzeica}),
we obtain
\begin{equation}
(\ln \mathcal{H})_{x_1x_{-1}}=\mathcal{H}-\frac{1}{\mathcal{H}^2}\,\,.
\end{equation}
This is the Tzitzeica equation which is one of
 prime integrable systems that appears in classical differential geometry
~\cite{Tzitzeica1,Tzitzeica2,Willox,Nimmo-Ruijsenaars}.
This equation is sometimes called the Dodd-Bullough-Mikhailov equation
 since this was found independently by Dodd-Bullough and
Mikhailov~\cite{Dodd-Bullough,Mikhailov1,Mikhailov2}.
\end{remark}

\begin{theorem}\label{theorem:solution}
The $N$-soliton solution of the reduced Ostrovsky equation
is given by the following formula:
\[
u=-(\ln f)_{x_1x_1}=-2(\ln \tau)_{x_1x_1}\,,
\]
\begin{equation}
f=\tau^2\,,\qquad \tau= {\rm pf}(1,2,...,2N-1,2N)\,,
\end{equation}
where
\begin{eqnarray}
&&{\rm pf}(i,j)=c_{i,j}
+\frac{p_i-p_j}{p_i+p_j}e^{\xi_i+\xi_{j}}
\,,\qquad \xi_i=p_ix_1+\frac{1}{p_i}x_{-1}+\xi_{i0}\,,\nonumber
\end{eqnarray}
and $c_{i,j}=\delta_{j,2N+1-i}c_i$\,, $c_i=-c_{2N+1-i}$\,,
$p_i^2-p_ip_{2N+1-i}+p_{2N+1-i}^2=0$\,,
\begin{equation}
\left\{
\begin{array}{c}
x=x_{-1}+\int_{-\infty}^{x_1}u(x_1',x_{-1})dx_1'\\
\quad =x_{-1}-(\ln f)_{{x_1}}\,,\qquad \qquad \\
t=x_1 \,.\qquad \qquad \qquad \qquad
\end{array}
\right.
\end{equation}
\end{theorem}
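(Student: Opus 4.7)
The plan is to assemble this theorem directly from the two preceding results, Corollary \ref{corollary:tau} and Lemma \ref{lemma:hodograph}, rather than to redo any of the underlying computations. Corollary \ref{corollary:tau} already establishes that the pfaffian $f = \tau^2$ with $\tau = {\rm pf}(1,2,\ldots,2N-1,2N)$ solves the bilinear system (\ref{bilinear-1})--(\ref{bilinear-2}) under precisely the algebraic constraints listed in the theorem, namely $c_{i,j} = \delta_{j,2N+1-i}\,c_i$, $c_i = -c_{2N+1-i}$, and the period-3 reduction condition $p_i^2 - p_i p_{2N+1-i} + p_{2N+1-i}^2 = 0$. Thus the bilinear half of the argument is already complete and needs only to be invoked.

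The second step is to apply Lemma \ref{lemma:hodograph}, which takes any $f$ (together with the associated $g$) satisfying (\ref{bilinear-1})--(\ref{bilinear-2}) and converts it, via $u = -(\ln f)_{x_1 x_1}$ and the hodograph transformation (\ref{hodograph}), into a solution of the reduced Ostrovsky equation (\ref{vakhnenko}). Plugging the pfaffian $f$ from Corollary \ref{corollary:tau} into this lemma produces exactly the formula asserted in the theorem. So the body of the proof will consist of nothing more than citing the two earlier results and remarking that their hypotheses match.

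The only points I would pause to verify are the two well-definedness issues: that the hodograph transformation is nondegenerate on this soliton family (equivalently $1 - (\ln f)_{x_1 x_{-1}} = 1/\rho \neq 0$, so the Jacobian $\partial x/\partial x_{-1}$ does not vanish generically), and that the boundary term $\int_{-\infty}^{x_1} u(x_1',x_{-1})\,dx_1' = -(\ln f)_{x_1}$ is indeed finite. Both follow at once from the explicit exponential structure $\xi_i = p_i x_1 + p_i^{-1} x_{-1} + \xi_{i0}$: the pfaffian tends to a nonzero constant as $x_1 \to -\infty$ for generic parameters, making $(\ln f)_{x_1} \to 0$ there.

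I do not anticipate a genuine obstacle, since the theorem is essentially a packaging of Corollary \ref{corollary:tau} and Lemma \ref{lemma:hodograph}. The only conceptual subtlety is the multi-valuedness of $u$ as a function of the physical coordinate $x$ (the loop-soliton phenomenon observed by Vakhnenko), but this is a feature of the solution itself rather than an obstruction to the proof; the statement sensibly presents the $N$-soliton solution in parametric form $(u,x,t) = (u(x_1,x_{-1}),\,x(x_1,x_{-1}),\,x_1)$, which bypasses the multi-valuedness entirely.
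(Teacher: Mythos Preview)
Your proposal is correct and matches the paper's own proof essentially verbatim: the paper's proof consists of the single sentence ``From Corollary \ref{corollary:tau} and Lemma \ref{lemma:hodograph}, we obtain this theorem.'' Your additional remarks on the nondegeneracy of the hodograph map and the decay of $(\ln f)_{x_1}$ as $x_1\to-\infty$ are extra care that the paper does not spell out, but they are consistent with the argument and do not change the approach.
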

\begin{proof}
From Corollary \ref{corollary:tau} and Lemma \ref{lemma:hodograph},
we obtain this theorem.
\end{proof}

\begin{corollary}\label{corollary:solution}
The $N$-soliton solution of the reduced Ostrovsky equation
is given by the following formula:
\[
u=-2(\ln \tau)_{x_1x_1}\,,
\]
\begin{eqnarray}
\fl &&\qquad \tau=\sum_{\mu=0,1}\exp\left[\sum_{i=1}^N\mu_i\eta_i
+\sum_{i<j}^{(N)}\mu_i\mu_j\ln b_{ij} \right]\,,\\
\fl &&\qquad  b_{ij}=
\frac{(k_i-k_j)^2(k_i^2+k_ik_j+k_j^2)}{(k_i+k_j)^2(k_i^2-k_ik_j+k_j^2)}\,
\quad {\rm for}\quad
i<j\,,
\quad \eta_i=k_ix_1+\frac{3}{k_i}x_{-1}+\eta_{i0}\,,\nonumber
\end{eqnarray}
and
\begin{equation}
\left\{
\begin{array}{c}
x=x_{-1}+\int_{-\infty}^{x_1}u(x_1',x_{-1})dx_1'\\
\quad =x_{-1}-(\ln f)_{{x_1}}\,,\qquad \qquad \\
t=x_1 \,.\qquad \qquad \qquad \qquad
\end{array}
\right.
\end{equation}
Here $\displaystyle \sum_{\mu=0,1}$ means the summation over all possible
combinations of $\mu_i=0$ or $1$ for $i=1,2,\cdots,N$,
and $\displaystyle \sum_{i<j}^{(N)}$ means the summation over all possible
combinations
of $N$ elements under the condition $i<j$.
\end{corollary}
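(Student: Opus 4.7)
The plan is to derive the expanded exponential form directly from the pfaffian expression of Theorem \ref{theorem:solution} by carrying out the Hirota-style expansion of $\tau = \mathrm{pf}(1, 2, \ldots, 2N)$. The $N = 1$ and $N = 2$ calculations displayed immediately before Theorem \ref{theorem:solution} already illustrate the mechanism; the Corollary is the general-$N$ consolidation, and the hodograph transformation together with $u = -2(\ln \tau)_{x_1 x_1}$ can be inherited verbatim from that theorem.

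First, I would expand $\mathrm{pf}(1, 2, \ldots, 2N)$ as a sum over all perfect matchings of $\{1, \ldots, 2N\}$. Since each entry splits as $\mathrm{pf}(i,j) = c_{i,j} + \frac{p_i - p_j}{p_i + p_j} e^{\xi_i + \xi_j}$, and since $c_{i,j}$ is supported only on the anti-diagonal pairs $(i, 2N+1-i)$, the nonzero terms in the fully expanded product are naturally indexed by a subset $S \subseteq \{1, \ldots, N\}$ (the canonical pairs that are ``opened'' by selecting the exponential piece) together with an internal matching of the opened indices $\bigcup_{i \in S} \{i, 2N+1-i\}$. Writing $\mu_i = 1$ for $i \in S$ and $\mu_i = 0$ otherwise identifies this combinatorial sum with $\sum_{\mu = 0, 1}$ in the statement.

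Next, I would extract the coefficients. The empty set contributes $\prod_{i=1}^{N} c_i$, normalized to $1$ by $c_i = 1$. A single element $\mu_i = 1$ contributes $\frac{p_i - p_{2N+1-i}}{p_i + p_{2N+1-i}} e^{\xi_i + \xi_{2N+1-i}}$, and absorbing the rational prefactor into the phase constants produces $e^{\eta_i}$ with $\eta_i = k_i x_1 + 3 k_i^{-1} x_{-1} + \eta_{i0}$; here one uses $p_i + p_{2N+1-i} = k_i$ together with $p_i p_{2N+1-i} = k_i^2/3$, which yields $p_i^{-1} + p_{2N+1-i}^{-1} = 3/k_i$ and follows from the constraint $p_i^2 - p_i p_{2N+1-i} + p_{2N+1-i}^2 = 0$. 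Two elements $\mu_i = \mu_j = 1$ trigger exactly the three-matching calculation of the $N = 2$ example, defining $b_{ij}$. For $|S| \geq 3$ the corresponding coefficient is claimed to factorize into $\prod_{a < b \in S} b_{ab}$; this pairwise-interaction structure is the hallmark of soliton tau-functions of Gram-pfaffian type and can be verified by induction on $|S|$, observing that each ``opened'' canonical pair $(a, 2N+1-a)$ interacts with another $(b, 2N+1-b)$ only through the four-index subproblem already solved at $|S| = 2$.

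The main obstacle is the rational-function reduction of $b_{ij}$ from the variables $p_i, p_{2N+1-i}, p_j, p_{2N+1-j}$ to the soliton momenta $k_i, k_j$. One must place the three contributions from the $N = 2$ expansion over a common denominator and then invoke the cubic constraints $p_a^2 - p_a p_{2N+1-a} + p_{2N+1-a}^2 = 0$ for $a = i, j$ to collapse the numerator and denominator into the quadratic-times-quadratic form stated. Once this pairwise identity is confirmed and the factorization across $|S|$ is established, the hodograph part of the statement is identical to that of Theorem \ref{theorem:solution} and requires nothing new.
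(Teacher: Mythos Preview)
Your approach is essentially the paper's: the paper also passes from the pfaffian of Theorem~\ref{theorem:solution} to the exponential form by expanding over subsets, doing the $N=1$ and $N=2$ cases explicitly (these are the computations displayed just before Theorem~\ref{theorem:solution}) and then simply stating the general-$N$ formula with a reference to \cite{Vakhnenko2,Vakhnenko3}, after which the Corollary follows from Lemma~\ref{lemma:hodograph}. The only place you go beyond the paper is your sketch of the $|S|\ge 3$ factorization; the clean way to justify that step is the Schur pfaffian identity $\mathrm{pf}\!\left(\tfrac{p_a-p_b}{p_a+p_b}\right)=\prod_{a<b}\tfrac{p_a-p_b}{p_a+p_b}$, which immediately gives the product over all pairs of opened indices and hence the claimed $\prod_{i<j\in S} b_{ij}$ after separating off the diagonal factors---your ``induction on $|S|$ via pairwise interactions'' is gesturing at this but is not itself an argument.
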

Note that the form of the $N$-soliton solution in
Corollary \ref{corollary:solution}
is equivalent to the one obtained by
Morrison, Parkes and Vakhnenko in \cite{Vakhnenko3}.

\begin{remark}
It is well known that the 3-reduced B-Toda system is exactly
 the same as the 3-reduced C-Toda system. Thus the above $\tau$-function
 can be obtained by the period 3-reduction of
$C_{\infty}$ 2D-Toda system. For more details, see Appendix.
\end{remark}

\subsection{The reduced Ostrovsky equation and
the 3-reduced extended BKP hierarchy}

Consider a member of the extended BKP hierarchy (the BKP hierarchy with
negative time variables)
~\cite{HirotaBook,Jimbo-Miwa,DJKM1,DJKM2,DJKM3,Hirota-BKP1,Hirota-BKP2}
\begin{equation}
[(D_{x_3}-D_{x_1}^3)D_{x_{-1}}+3D_{x_1}^2]\tau\cdot \tau=0\,,\label{BKP-bilinear}
\end{equation}
which is the bilinear equation of a nonlinear partial differential equation
\begin{equation}
w_{x_{-1}x_3}-w_{x_1x_1x_1x_{-1}}-3(w_{x_1}w_{x_{-1}})_{x_1}+3w_{x_1x_1}=0\,,
\end{equation}
through the dependent variable transformation
$w=2(\ln \tau)_{x_1}$~\cite{HirotaBook}.

The $N$-soliton solution of (\ref{BKP-bilinear}) is expressed
in the form of pfaffian.
\begin{lemma}\label{lemma:BKP-tau}
[Date-Jimbo-Kashiwara-Miwa~\cite{DJKM1},
Hirota~\cite{HirotaBook,Hirota-BKP1,Hirota-BKP2}]
The bilinear equation (\ref{BKP-bilinear})
has the pfaffian solution
\[
\tau={\rm pf}(1,2,...,2N-1,2N)\,,
\]
where ${\rm pf}(i,j)=c_{i,j}+\int_{-\infty}^{x_1}D_{x_1}h_i({\bf x})\cdot
h_j({\bf x})dx_1$, $c_{i,j}=-c_{j,i}$, $c_{i,i}=0$,
${\bf x}=(\dots, x_{-3},x_{-1},x_1,x_3,x_5\dots)$
and $h_i({\bf x})$ satisfies the linear
differential equation
\begin{equation}
\frac{\partial}{\partial x_n}h_i({\bf x})=
\frac{\partial^n}{\partial x_1^n}h_i({\bf x})\,.
\end{equation}
Here we define
\begin{equation}
\frac{\partial}{\partial x_{-1}}h_i({\bf x})\equiv
\int_{-\infty}^{x_{1}}h_i({\bf x})dx_1\,,
\end{equation}
for $n=-1$. Particularly, we could choose
\begin{equation}
h_i({\bf x})=\exp(p_i^{-1}x_{-1}+p_ix_1+p_i^3x_3+\xi_i^0)\,.
\end{equation}
\end{lemma}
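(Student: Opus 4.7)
The plan is to follow the character-pfaffian method, standard for BKP-type tau-functions. I would introduce auxiliary indices $d_k$ for $k\in\{-1,0,1,2,3\}$ by declaring ${\rm pf}(d_k,i):=\partial_{x_1}^k h_i({\bf x})$ (with $\partial_{x_1}^{-1}h_i:=\int_{-\infty}^{x_1}h_i\,dx_1$) and ${\rm pf}(d_m,d_n):=0$. Under these conventions, the expansion of the $4\times 4$ pfaffian ${\rm pf}(d_0,d_1,i,j)$ immediately reproduces $\partial_{x_1}{\rm pf}(i,j)=D_{x_1}h_i\cdot h_j$, so that partial derivatives of pfaffian entries correspond to appending character indices to the pfaffian bracket.

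The first substantive step is to express every derivative of $\tau={\rm pf}(\bullet)$, with $\bullet:=(1,2,\ldots,2N)$, appearing in \eqref{BKP-bilinear} as a linear combination of pfaffians on $\bullet$ augmented by characters drawn from $\{d_{-1},d_0,d_1,d_2,d_3\}$. The first-order rules
\[
\partial_{x_1}\tau={\rm pf}(d_0,d_1,\bullet),\quad \partial_{x_3}\tau={\rm pf}(d_0,d_3,\bullet),\quad \partial_{x_{-1}}\tau={\rm pf}(d_{-1},d_0,\bullet),
\]
follow from the dispersion relations $\partial_{x_n}h_i=\partial_{x_1}^n h_i$, integration by parts (for $\partial_{x_{-1}}$), and the standard entrywise derivative formula for pfaffians. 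The higher and mixed derivatives $\partial_{x_1}^2\tau$, $\partial_{x_1}^3\tau$, $\partial_{x_1}\partial_{x_{-1}}\tau$, $\partial_{x_3}\partial_{x_{-1}}\tau$, and $\partial_{x_1}^3\partial_{x_{-1}}\tau$ needed by the equation are then obtained by iterating these rules, each producing a specific combination of pfaffians with those characters appended.

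Substituting these expressions into $[(D_{x_3}-D_{x_1}^3)D_{x_{-1}}+3D_{x_1}^2]\tau\cdot\tau$, I expect the operator structure to be arranged precisely so that all contributions containing the character $d_3$ cancel between $D_{x_3}D_{x_{-1}}$ and $D_{x_1}^3D_{x_{-1}}$, leaving an identity among pfaffians labeled only by characters in $\{d_{-1},d_0,d_1,d_2\}$. That residual identity should collapse to the standard Pl\"ucker-type pfaffian relation
\[
{\rm pf}(d_{-1},d_0,d_1,d_2,\bullet){\rm pf}(\bullet)-{\rm pf}(d_{-1},d_0,\bullet){\rm pf}(d_1,d_2,\bullet)+{\rm pf}(d_{-1},d_1,\bullet){\rm pf}(d_0,d_2,\bullet)-{\rm pf}(d_{-1},d_2,\bullet){\rm pf}(d_0,d_1,\bullet)=0,
\]
which holds unconditionally as an algebraic consequence of the multilinearity of pfaffians.

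The main obstacle I anticipate is the combinatorial bookkeeping: both $D_{x_1}^2$ and $(D_{x_3}-D_{x_1}^3)D_{x_{-1}}$ generate many terms with competing character labels, and the argument succeeds only once one checks that the coefficients align to recover exactly the Pl\"ucker relation above. In particular, although $\partial_{x_3}=\partial_{x_1}^3$ at the level of the entries $h_i$, it is not true at the level of Hirota operators acting on $\tau\cdot\tau$, so the $d_3$-cancellation must be verified term by term. One must also confirm that the boundary contributions from the integrations by parts used in defining $\partial_{x_{-1}}$ vanish, which is automatic for $h_i=\exp(p_i^{-1}x_{-1}+p_ix_1+p_i^3x_3+\xi_i^0)$ whenever the $p_i$ give the requisite decay as $x_1\to-\infty$.
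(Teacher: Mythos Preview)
The paper does not actually prove this lemma; its entire proof is ``See \cite{HirotaBook}''. Your outline is precisely the character-pfaffian technique developed in that reference, so you are in effect reconstructing the cited argument, and the overall strategy is correct.

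One correction to your first-order rules: the formula $\partial_{x_3}\tau={\rm pf}(d_0,d_3,\bullet)$ is incomplete. Because ${\rm pf}(i,j)$ is defined through an $x_1$-integral, differentiating in $x_3$ (via $\partial_{x_3}h_i=\partial_{x_1}^3h_i$) and then integrating by parts leaves a residual term:
\[
\partial_{x_3}{\rm pf}(i,j)={\rm pf}(d_0,d_3,i,j)-2\,{\rm pf}(d_1,d_2,i,j),
\]
hence $\partial_{x_3}\tau={\rm pf}(d_0,d_3,\bullet)-2\,{\rm pf}(d_1,d_2,\bullet)$. (By contrast your rules for $\partial_{x_1}$ and $\partial_{x_{-1}}$ are correct as stated.) This extra $(d_1,d_2)$ contribution is not a nuisance but essential: it is exactly what combines with the pfaffians generated by $D_{x_1}^3D_{x_{-1}}$ and $3D_{x_1}^2$ so that the whole bilinear expression collapses onto the single four-character pfaffian identity you wrote down. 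With this adjustment your plan goes through as described.
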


Alternatively, the $N$-soliton solution of (\ref{BKP-bilinear}) can be expressed as
\begin{equation}
\tau=\sum_{\mu_i=0,1} \exp
\left[
\sum_{i=1}^N\mu_i\eta_i+\sum_{1\leq i<j\leq N}B_{i,j}\mu_i\mu_j
\right]\,,
\end{equation}
where $\sum_{\mu_i=0,1}$ denotes the summation over all possible
combinations of $\mu_i=0,1$ for $i=1,2,\cdots,N$,
and $\sum_{1\leq i<j\leq N}$ is the sum over
all pairs $i,j\,\, (i<j)$ chosen from $\{1,2,\cdots,2N\}$.
Here $\exp(\eta_i)$ is defined as
\[
 \exp(\eta_i)=\exp(\xi_i+\hat{\xi}_i)\,,
\]
where
$\xi_i=p_i^{-1}x_{-1}+p_ix_1+p_i^3x_3+\xi_i^0$ and
$\hat{\xi}_i=q_i^{-1}x_{-1}+q_ix_1+q_i^3x_3+\hat{\xi}_i^0$,
and $B_{i,j}$ is given by
\[
e^{B_{i,j}}=b_{i,j}=\frac{(p_i-p_j)(p_i-q_j)(q_i-p_j)(q_i-q_j)}
{(p_i+p_j)(p_i+q_j)(q_i+p_j)(q_i+q_j)}\,.
\]

\begin{proof}
See \cite{HirotaBook}.
\end{proof}

\begin{lemma}\label{lemma:pfaffian-BKP}
The bilinear equation
\begin{equation}
(D_{x_{-1}}D_{x_1}^3-3D_{x_1}^2)\tau \cdot \tau=0\,,
\end{equation}
has the the following $N$-soliton solution:
\[
\tau={\rm pf}(1,2,...,2N-1,2N)\,,
\]
where
\begin{eqnarray}
&&{\rm pf}(i,j)=c_{i,j}
+\frac{p_i-p_j}{p_i+p_j}e^{\xi_i+\xi_{j}}
\,,\qquad \xi_i=p_ix_1+{p^{-1}_i}x_{-1}+\xi_{i0}\,,\nonumber
\end{eqnarray}
and
$c_{i,j}=\delta_{j,2N+1-i}c_i$\,, $c_i=-c_{2N+1-i}$\,,
$p_i^2-p_ip_{2N+1-i}+p_{2N+1-i}^2=0$.
\end{lemma}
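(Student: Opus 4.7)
The plan is to derive this lemma as the $3$-reduction of the extended BKP bilinear equation~(\ref{BKP-bilinear}). The key observation is that if a $\tau$-function satisfies (\ref{BKP-bilinear}) and is independent of $x_3$, then $D_{x_3}D_{x_{-1}}\tau\cdot\tau$ vanishes identically and (\ref{BKP-bilinear}) collapses to $(D_{x_{-1}}D_{x_1}^3-3D_{x_1}^2)\tau\cdot\tau=0$, which is precisely the bilinear equation in the statement.

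The first step is to realize the proposed pfaffian as a specialization of the general BKP pfaffian of Lemma~\ref{lemma:BKP-tau}. Including the $x_3$-flow in the phases, $\xi_i=p_i^{-1}x_{-1}+p_ix_1+p_i^3x_3+\xi_{i0}$, and taking $h_i({\bf x})=\exp(\xi_i)$, a direct computation of the defining integral gives
\[
\int_{-\infty}^{x_1} D_{x_1}h_i\cdot h_j\,dx_1 = (p_i-p_j)\int_{-\infty}^{x_1}e^{\xi_i+\xi_j}\,dx_1 = \frac{p_i-p_j}{p_i+p_j}\,e^{\xi_i+\xi_j},
\]
which matches the entries in the statement (modulo the $x_3$-term in $\xi_i$, which will drop out below). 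Hence the resulting $\tau$ solves the extended BKP bilinear equation (\ref{BKP-bilinear}) for arbitrary $p_i$ and skew-symmetric $c_{i,j}$.

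The principal step, and the main obstacle, is to show that under the constraints $c_{i,j}=\delta_{j,2N+1-i}c_i$ with $c_i=-c_{2N+1-i}$, together with $p_i^2-p_ip_{2N+1-i}+p_{2N+1-i}^2=0$ (equivalently $p_i^3+p_{2N+1-i}^3=0$), the pfaffian $\tau$ is independent of $x_3$. I would expand $\tau$ as a sum over perfect matchings of $\{1,\dots,2N\}$ and, within each matching, further expand each factor ${\rm pf}(i,j)=c_{i,j}+\frac{p_i-p_j}{p_i+p_j}e^{\xi_i+\xi_j}$ into its constant and exponential pieces. Since $c_{i,j}$ is nonzero only when $j=2N+1-i$, any edge contributing a constant piece to a nonvanishing summand must be of the form $(i,2N+1-i)$. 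Hence the index set $T\subseteq\{1,\dots,2N\}$ of indices entering exponential factors is the complement of a union of such symmetric pairs, and therefore $T$ is itself invariant under the involution $k\mapsto 2N+1-k$. The $x_3$-coefficient of the total exponent is then $\sum_{k\in T}p_k^3$, which regroups as a sum over the symmetric pairs contained in $T$ of $p_k^3+p_{2N+1-k}^3=0$. Every summand is thus $x_3$-independent, so is $\tau$, the $x_3$-term can be suppressed from the phases, and the target bilinear equation follows from the reduction described at the outset.
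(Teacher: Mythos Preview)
Your proof is correct and follows the same approach as the paper: specialize the BKP pfaffian of Lemma~\ref{lemma:BKP-tau}, then show that the constraints force $\partial_{x_3}\tau=0$ so that (\ref{BKP-bilinear}) reduces to the target bilinear equation. Your combinatorial matching argument spelling out why the surviving exponentials involve only the symmetric sums $\xi_i+\xi_{2N+1-i}$ is more detailed than the paper's one-line assertion of the same fact, but the logic is identical.
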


\begin{proof}
Since all phase functions are expressed by the summation of
$\xi_i+\xi_{2N+1-i}$, $x_3$ is eliminated from the $\tau$-function by imposing constraints: 
$p_i^2-p_ip_{2N+1-i}+p_{2N+1-i}^2=0$ for $i=1,2,\cdots , N$ and
$c_{i,j}=\delta_{j,2N+1-i}c_i$, $c_i=-c_{2N+1-i}$ to the $N$-soliton
solution of the BKP $\tau$-function in Lemma \ref{lemma:BKP-tau}~
\cite{Jimbo-Miwa,DJKM1,DJKM2,DJKM3}. Thus $\partial_{x_3}\tau=0$ is satisfied.
\end{proof}

\begin{lemma}\label{lemma:hodograph-BKP}
The reduced Ostrovsky equation (\ref{vakhnenko}) is transformed into
the bilinear equation
\begin{equation}
(D_{x_{-1}}D_{x_1}^3-3D_{x_1}^2)\tau \cdot \tau=0\,,
\end{equation}
through the dependent transformation
\begin{equation}
u=-w_{x_1}=-2(\ln \tau)_{x_1x_1}\,,
\end{equation}
and the hodograph (reciprocal) transformation
(\ref{hodograph}).
\end{lemma}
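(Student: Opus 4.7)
The plan is to unfold the bilinear equation into a logarithmic PDE for $\phi := \ln\tau$, substitute $u = -2\phi_{x_1x_1}$, apply the hodograph transformation \eqref{hodograph}, and observe that a factor emerging from the bilinear identity cancels precisely against the Jacobian of the reciprocal change of variables, leaving \eqref{vakhnenko}.

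First I would divide $(D_{x_{-1}}D_{x_1}^3 - 3D_{x_1}^2)\tau\cdot\tau = 0$ by $2\tau^2$ and invoke the Hirota identities
\[
\frac{D_{x_1}^2\tau\cdot\tau}{2\tau^2} = \phi_{x_1x_1},\qquad
\frac{D_{x_1}^3D_{x_{-1}}\tau\cdot\tau}{2\tau^2} = \phi_{x_1x_1x_1x_{-1}} + 6\,\phi_{x_1x_1}\phi_{x_1x_{-1}}.
\]
The first is immediate; the second follows from writing each quotient $\tau_{x_1^a x_{-1}^b}/\tau$ in terms of $\phi$-derivatives via $\tau=e^{\phi}$ and tracking the many pairwise cancellations. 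The bilinear equation thus becomes $\phi_{x_1x_1x_1x_{-1}} + 6\phi_{x_1x_1}\phi_{x_1x_{-1}} - 3\phi_{x_1x_1} = 0$, which, with $u = -2\phi_{x_1x_1}$ and $P := \phi_{x_1x_{-1}}$, collapses to
\[
u_{x_1x_{-1}} = 3u\,(1-2P).
\]

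Next I would apply the hodograph map \eqref{hodograph}. Identifying $f = \tau^2$ in the notation of Lemma~\ref{lemma:hodograph}, one has $x = x_{-1} - (\ln f)_{x_1} = x_{-1} - 2(\ln\tau)_{x_1}$, so that the Jacobian is $\partial x/\partial x_{-1}|_{x_1} = 1-2P$, and, exactly as in \eqref{hodograph-derivative},
\[
\partial_{x_{-1}} = (1-2P)\,\partial_x,\qquad \partial_{x_1} = \partial_t + u\,\partial_x.
\]
Composing these gives $u_{x_1x_{-1}} = (1-2P)(u_t+uu_x)_x$. Substituting into the boxed relation above and cancelling the common factor $(1-2P)$, which is non-vanishing for generic smooth solutions, produces $(u_t+uu_x)_x = 3u$, i.e.\ the reduced Ostrovsky equation \eqref{vakhnenko}.

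The principal technical step is the second Hirota identity. After expanding
\[
\frac{D_{x_1}^3D_{x_{-1}}\tau\cdot\tau}{2\tau^2} = \frac{\tau_{x_1x_1x_1x_{-1}}}{\tau} - \frac{\tau_{x_{-1}}\tau_{x_1x_1x_1}}{\tau^2} - 3\frac{\tau_{x_1}\tau_{x_1x_1x_{-1}}}{\tau^2} + 3\frac{\tau_{x_1x_1}\tau_{x_1x_{-1}}}{\tau^2}
\]
in $\phi$-derivatives, one must verify that every monomial involving the ``auxiliary'' lower-order quantities $\phi_{x_1}$, $\phi_{x_{-1}}$, $\phi_{x_1x_1x_1}$ and their products cancels out, leaving precisely $\phi_{x_1x_1x_1x_{-1}} + 6\phi_{x_1x_1}\phi_{x_1x_{-1}}$. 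The coincidence between the coefficient $6$ here and the factor $2$ appearing in the hodograph Jacobian $1-2P$ is exactly what drives the final cancellation and recovers the reduced Ostrovsky equation.
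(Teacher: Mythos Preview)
Your proof is correct and amounts to the same computation as the paper's, carried out in the reverse direction. The paper starts from the reduced Ostrovsky equation, applies the hodograph derivatives \eqref{hodograph-derivative} to obtain $u_{x_1x_{-1}}=\tfrac{3}{\rho}u$, inserts $\tfrac{1}{\rho}=1-w_{x_{-1}}$ with $w=2(\ln\tau)_{x_1}$, and arrives at $w_{x_1x_1x_{-1}}+3w_{x_{-1}}w_{x_1}-3w_{x_1}=0$, which is then bilinearised. With $w=2\phi_{x_1}$ this is exactly your equation $\phi_{x_1x_1x_1x_{-1}}+6\phi_{x_1x_1}\phi_{x_1x_{-1}}-3\phi_{x_1x_1}=0$, and your Jacobian factor $1-2P$ is the paper's $1/\rho$. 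The only substantive difference is that by running Ostrovsky $\to$ bilinear the paper never has to divide by $1-2P$, whereas you must justify cancelling it; your genericity remark handles this adequately.
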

\begin{proof}
Applying (\ref{hodograph-derivative})
to the reduced Ostrovsky equation (\ref{vakhnenko}),
we obtain
\begin{equation}
\rho u_{x_1x_{-1}}=3u\,,
\end{equation}
which yields
\begin{equation}
u_{x_1x_{-1}}=\frac{3}{\rho}u\,.\label{vakhnenko-transform}
\end{equation}
Let us introduce $w$ such that $u=-w_{x_1}=-(\ln f)_{x_1x_{1}}$.
From (\ref{ln-bi-1}), we obtain
\begin{equation}
\frac{1}{\rho}=1-w_{x_{-1}}\,.
\end{equation}
Using this relation, (\ref{vakhnenko-transform}) can be rewritten as
\begin{equation}
w_{x_1x_1x_{-1}}+3w_{x_{-1}}w_{x_1}-3w_{x_1}=0\,.
\end{equation}
This equation is transformed into a bilinear equation
\begin{equation}
(D_{x_{-1}}D_{x_1}^3-3D_{x_1}^2)\tau \cdot \tau=0\,,
\label{specialIto-bilinear}
\end{equation}
through the dependent transformation
\begin{equation}
w=2(\ln \tau)_{x_1}\,.
\end{equation}
Note that the hodograph (reciprocal) transformation (\ref{hodograph})
can be expressed as
\begin{equation}
\left\{
\begin{array}{c}
x=x_{-1}-2(\ln \tau)_{x_1}\,,\\
t=x_1 \,.\qquad \qquad \quad
\end{array}
\right.\label{hodograph-B}
\end{equation}
\end{proof}

From Lemma \ref{lemma:pfaffian-BKP} and \ref{lemma:hodograph-BKP}, we
can also prove Theorem \ref{theorem:solution}.

\begin{remark}
Vakhnenko and Parkes found the hodograph (reciprocal)
transformation (\ref{hodograph-B}) and the bilinear equation
(\ref{specialIto-bilinear}). 
It is noted that we can switch $x_{-1}$ and $x_{1}$, i.e., the 
alternative hodograph (reciprocal) transformation 
$x=x_{1}-(\ln f)_{x_{-1}}=x_{1}-2(\ln \tau)_{x_{-1}}$, $t=x_{-1}$ can be
 used instead of the above hodograph (reciprocal) transformation.
\end{remark}

\begin{remark}
Morrison, Parkes and Vakhnenko pointed out that the bilinear equation
 (\ref{specialIto-bilinear}) is a special case of
the Ito equation~\cite{Vakhnenko3,Ito}.
Hone and Wang pointed out that the
 reduced Ostrovsky equation is linked into the first negative flow of the
 Sawada-Kotera hierarchy.
The bilinear equation
 (\ref{specialIto-bilinear}) is nothing but the bilinear equation
of the first negative flow of the Sawada-Kotera hierarchy
which can be obtained from the BKP hierarchy with the 3-reduction,
{\it i.e.}, this is related to the $A_{2}^{(2)}$ affine
Lie algebra~\cite{Jimbo-Miwa}.
\end{remark}

\begin{remark}
The relationship between the Tzitzeica equation and the 3-reduced BKP
 hierarchy was pointed out by Lambert et al.~\cite{Lambert}.
\end{remark}

\section{Conclusions}

In this paper, we have shown
the reciprocal link between
the reduced Ostrovsky equation and the $A_2^{(2)}$
2D-Toda system.
Using this reciprocal link, we have constructed
the $N$-soliton solution of the reduced Ostrovsky
equation in the form of pfaffian.
The bilinear equations and
the $\tau$-function of the reduced Ostrovsky equation have been
obtained from the period 3-reduction of the $B_{\infty}$ or
$C_{\infty}$ 2D-Toda system.
One of the $\tau$-functions of the $A_2^{(2)}$ 2D-Toda system
has been shown to be the square of a pfaffian, by which the $N$-soliton 
solution of the reduced Ostrovsky equation is represented.
We have shown that the bilinear equation for a member of the 3-reduced 
extended BKP hierarchy
(the first negative flow of the Sawada-Kotera hierarchy) 
can also give rise to the 
the reduced Ostrovsky equation, thus, the same pfaffian solution.

It should be noted that a similar approach in this paper
was used to the short wave model of the
Camassa-Holm equation in our previous paper~\cite{discreteSCH}.
We proposed an integrable semi-discrete and an integrable
fully discrete analogues of
the short wave model of the Camassa-Holm equation
based on bilinear equations and a determinant solution.
Thus we can construct integrable discrete analogues of
the reduced Ostrovsky equation (or the short wave model of the
DP equation) by using the result in this paper.
We also would like to comment on the $N$-soliton solution of
the DP equation can be obtained by the reduction of the 2D-Toda system developed in this paper.
Our approach gives the $N$-soliton solution of the
DP equation in a much simpler way
compared to Matsuno's method~\cite{Matsuno-DP1,Matsuno-DP2}.
We will report the detail in our forthcoming paper.

\section*{Appendix: The period 3-reduction of
 the $C_{\infty}$ 2D-Toda system}

In this appendix, we show that the $\tau$-function obtained in the
section 2 can be obtained from the period
3-reduction of the $C_{\infty}$ 2D-Toda system.

We impose the $C_{\infty}$-reduction $\theta_{n}=\theta_{-n}$ ($n\geq
0$) to the $A_{\infty}$ 2D-Toda system, {\it i.e.},
fold the infinite sequence $\{\dots,
 \theta_{-2},\theta_{-1},\theta_0,\theta_1,\theta_2,\dots\}$
in
$\theta_{0}$~\cite{Jimbo-Miwa,Ueno-Takasaki,Nimmo-Willox,Willox}.
Then we obtain
the $C_{\infty}$ 2D-Toda system~\cite{Leznov-Saveliev,Mikhailov3}:
\begin{equation}
\frac{\partial^2\theta_n}{\partial x_1 \partial x_{-1}}
=-\sum_{m\in \mathbf{Z}_{\geq 0}}a_{n+1\,,\,m+1}e^{-\theta_m}\,,
\quad n\in \mathbf{Z}_{\geq 0}\,,\label{2dToda-C}
\end{equation}
where the matrix $A=(a_{n,m})$ is the transpose of the Cartan matrix for
the infinite dimensional Lie algebra $C_{\infty}$~\cite{Nimmo-Willox}.

The $C_{\infty}$ 2D-Toda system (\ref{2dToda-C})
is transformed into the bilinear equations
\begin{eqnarray}
&&-\left(\frac{1}{2}D_{x_1}D_{x_{-1}}-1\right)\tau_{0}\cdot
 \tau_{0}=\tau_{1}^2\,,\label{2dtl-bilinear-C-1}\\
&&-\left(\frac{1}{2}D_{x_1}D_{x_{-1}}-1\right)\tau_n\cdot
 \tau_n=\tau_{n-1}\tau_{n+1}
\,, \quad {\rm for} \quad n\geq 1\,,\label{2dtl-bilinear-C-2}
\end{eqnarray}
through the dependent variable transformation
\begin{equation}
\theta_{0}=-\ln \frac{\tau_1^2}{\tau_{0}^2}\,,\qquad {\rm and}
\qquad
\theta_n=-\ln \frac{\tau_{n+1}\tau_{n-1}}{\tau_n^2}\quad {\rm
for}\,\,\,\,n\geq 1.
\end{equation}

\begin{lemma}
The bilinear equations (\ref{2dtl-bilinear-C-1}) and (\ref{2dtl-bilinear-C-2})
have the $N$-soliton solution which is expressed as
\begin{equation}
\tau_n={\rm det}\left(
\psi_{i,j}^{(n)}\right)_{1\leq i,j \leq 2N}\,\,,
\end{equation}
where
\begin{eqnarray}
&&\psi_{i,j}^{(n)}=c_{i,j}+(-1)^n\int_{-\infty}^{x_1}\varphi_i^{(n)}
\varphi_j^{(-n)}dx_1\,,\\
&&\varphi_i^{(n)}=p_i^ne^{\xi_i}\,,
\qquad \xi_i=p_ix_1+\frac{1}{p_i}x_{-1}+p_i^3x_3
+\frac{1}{p_i^3}x_{-3}+\cdots +\xi_{i0}\,,\nonumber
\end{eqnarray}
and $c_{i,j}=c_{j,i}$.
\end{lemma}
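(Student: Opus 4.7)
The plan is to derive this lemma by imposing the $C_\infty$ reduction on the Gram-type determinant solution of the $A_\infty$ 2D-Toda hierarchy given in Lemma 2.2, in exact parallel with the treatment of the $B_\infty$ case in Lemma 2.4. At the level of the fields $\theta_n$, the $C_\infty$ reduction $\theta_n=\theta_{-n}$ amounts to folding the infinite sequence at $\theta_0$, which at the level of the $\tau$-functions becomes the symmetry $\tau_n=\tau_{-n}$. Accordingly, the target is to show that with the specializations displayed in the lemma the Gram-type determinant acquires exactly this symmetry, and hence the $A_\infty$ bilinear equation (\ref{2dtl-bilinear}) for $n=0$ collapses into (\ref{2dtl-bilinear-C-1}) while the equations for $n\geq 1$ reproduce (\ref{2dtl-bilinear-C-2}).

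First, I would specialize the $A_\infty$ Gram determinant by identifying $\hat{\varphi}_j^{(n)}=\varphi_j^{(n)}$, replacing the skew-symmetric constant matrix used in the $B_\infty$ case by the symmetric one $c_{i,j}=c_{j,i}$, taking $M=2N$, and setting every even-indexed flow $x_{2k}\equiv 0$. This is exactly the reason why the exponent $\xi_i$ in the statement contains only odd-indexed times. The key compatibility check is that the two differential recursions of Lemma 2.2,
\[
\frac{\partial \varphi_i^{(n)}}{\partial x_k}=\varphi_i^{(n+k)},\qquad
\frac{\partial \hat{\varphi}_i^{(n)}}{\partial x_k}=(-1)^{k-1}\hat{\varphi}_i^{(n+k)},
\]
collapse to a single relation under the identification $\hat{\varphi}_i^{(n)}=\varphi_i^{(n)}$ once $k$ is restricted to odd integers, since then $(-1)^{k-1}=1$. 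Thus the identification is consistent with the hierarchy structure on the reduced set of odd times, and the soliton choice $\varphi_i^{(n)}=p_i^n e^{\xi_i}$ with the given $\xi_i$ satisfies the recursion.

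The second step is to read off the symmetry $\tau_n=\tau_{-n}$. A direct calculation using $c_{i,j}=c_{j,i}$ and $(-1)^{-n}=(-1)^{n}$ gives
\[
\psi_{j,i}^{(-n)}=c_{j,i}+(-1)^{-n}\int_{-\infty}^{x_1}\varphi_j^{(-n)}\varphi_i^{(n)}\,dx_1
=c_{i,j}+(-1)^{n}\int_{-\infty}^{x_1}\varphi_i^{(n)}\varphi_j^{(-n)}\,dx_1=\psi_{i,j}^{(n)},
\]
so the matrix with entries $\psi_{i,j}^{(-n)}$ is the transpose of the matrix with entries $\psi_{i,j}^{(n)}$, which yields $\tau_{-n}=\tau_{n}$ at once. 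With this symmetry in hand, the $n=0$ instance of the $A_\infty$ bilinear equation (\ref{2dtl-bilinear}) becomes
\[
-\left(\frac{1}{2}D_{x_1}D_{x_{-1}}-1\right)\tau_0\cdot\tau_0=\tau_{-1}\tau_1=\tau_1^{2},
\]
which is (\ref{2dtl-bilinear-C-1}), and the equations for $n\geq 1$ are unchanged from the $A_\infty$ hierarchy and reproduce (\ref{2dtl-bilinear-C-2}).

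The main obstacle, as in the $B_\infty$ case, is ensuring that the identification of $\hat{\varphi}$ with $\varphi$ is compatible with \emph{every} flow of the hierarchy that is kept; the resolution is precisely the truncation to odd-indexed times, after which no additional verification is required beyond the general fact that the Gram-type determinant of Lemma 2.2 solves the full $A_\infty$ 2D-Toda hierarchy for any linearly independent $\varphi_i$, $\hat{\varphi}_j$ obeying the stated differential recursions.
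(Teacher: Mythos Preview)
Your proposal is correct and follows essentially the same route as the paper: impose the specialization $\hat{\varphi}_j^{(n)}=\varphi_j^{(n)}$, $c_{i,j}=c_{j,i}$, $M=2N$, $x_{2k}\equiv 0$ on the $A_\infty$ Gram determinant, verify $\psi_{i,j}^{(n)}=\psi_{j,i}^{(-n)}$ so that $\tau_n=\tau_{-n}$, and read off the $C_\infty$ bilinear equations from the $A_\infty$ one. Your added remark that the identification $\hat\varphi=\varphi$ is consistent with the recursions precisely because only odd-indexed times are retained is a helpful clarification that the paper leaves implicit; note, however, that your lemma numbers are off by one relative to the paper (the $A_\infty$ Gram solution is Lemma~2.1 and the $B_\infty$ reduction is Lemma~2.2).
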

\begin{proof}
Imposing the $C_{\infty}$ reduction $\tau_n=\tau_{-n}$, {\it i.e.},
folding the
sequence of the $\tau$-functions $\{\dots,
 \tau_{-2},\tau_{-1},\tau_0,\tau_1,\tau_2,\dots\}$
in
$\tau_{0}$, we have
$\tau_{-1}=\tau_1$, $\tau_{-2}=\tau_2$, $\tau_{-3}=\tau_3$, ....
~\cite{Jimbo-Miwa,Ueno-Takasaki,Nimmo-Willox,Willox}.
Thus we obtain the bilinear equations (\ref{2dtl-bilinear-C-1}) and
(\ref{2dtl-bilinear-C-2}) from the 2D-Toda bilinear equation
 (\ref{2dtl-bilinear}).

To impose the $C_{\infty}$ reduction to the Gram-type determinant
 solution of the $A_{\infty}$ 2D-Toda system,
we impose the constraint
$\hat{\varphi}_j^{(n)}=\varphi_j^{(n)}$, $c_{i,j}=c_{j,i}$,
$M=2N$ and $x_{2k}\equiv 0$ for every interger $k$.
With this constraint,
each element of the Gram-type determinant has the
following property:
\begin{eqnarray}
\fl \psi_{i,j}^{(n)}&=&c_{i,j}+(-1)^n\int_{-\infty}^{x_1}\varphi_i^{(n)}
\varphi_j^{(-n)}dx_1
=c_{j,i}+(-1)^{-n}\int_{-\infty}^{x_1}\varphi_j^{(-n)}
\varphi_i^{(n)}dx_1\nonumber\\
\fl &=&\psi_{j,i}^{(-n)}\,.
\end{eqnarray}
Then the $\tau$-function satisfies
$\tau_n=\tau_{-n}$.
Therefore the $N$-soliton solution of the $C_{\infty}$ 2D-Toda system
is expressed by the above Gram-type determinant.
\end{proof}

\begin{lemma}
The bilinear equations
\begin{eqnarray}
&&-\left(\frac{1}{2}D_{x_1}D_{x_{-1}}-1\right)\tau_{0}\cdot
 \tau_{0}=\tau_1^2\,,\label{3CToda-bilinear1}\\
&&-\left(\frac{1}{2}D_{x_1}D_{x_{-1}}-1\right)\tau_{1} \cdot
 \tau_{1}=\tau_{1}\tau_{0}\,,
\label{3CToda-bilinear2}
\end{eqnarray}
have
the $N$-soliton solution which is expressed as
\begin{equation}
\tau_n={\rm det}\left(
\psi_{i,j}^{(n)}\right)_{1\leq i,j \leq 2N}\,\,,
\end{equation}
where
\begin{eqnarray}
\fl &&\psi_{i,j}^{(n)}=c_{i,j}+(-1)^n\int_{-\infty}^{x_1}\varphi_i^{(n)}
\varphi_j^{(-n)}dx_1\,,\qquad
\varphi_i^{(n)}=p_i^ne^{\xi_i}\,,
\quad \xi_i=p_ix_1+\frac{1}{p_i}x_{-1}
+\xi_{i0}\,,\nonumber
\end{eqnarray}
and $c_{i,j}=\delta_{j,2N+1-i}\alpha_i$, $p_i^2-p_ip_{2N+1-i}+p_{2N+1-i}^2=0$.
\end{lemma}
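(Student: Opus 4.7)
The plan is to mirror the proof of Lemma \ref{lemma:3BToda} closely, since this statement is the $C_{\infty}$ analogue of that $B_{\infty}$ result. First I would note that the preceding lemma already gives the $C_{\infty}$ 2D-Toda bilinear equations satisfied by the Gram-type determinant. Imposing the additional period 3-reduction $\tau_{n+3}=\tau_n$ on top of the $C_{\infty}$ folding $\tau_n=\tau_{-n}$ collapses the whole sequence to just two distinct functions $\tau_0$ and $\tau_1$, with $\tau_2=\tau_{-2}=\tau_1$. Substituting $\tau_2=\tau_1$ into (\ref{2dtl-bilinear-C-1}) at $n=0$ and into (\ref{2dtl-bilinear-C-2}) at $n=1$ then yields the target equations (\ref{3CToda-bilinear1}) and (\ref{3CToda-bilinear2}) respectively.

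For the $N$-soliton construction, I would start from the $C_{\infty}$ Gram solution with $x_{2k}\equiv 0$, also suppress the $x_3$-dependence in $\xi_i$ (it will be rendered inert by the 3-reduction), and impose the parameter relation $p_i^3+p_{2N+1-i}^3=0$ together with the constant structure $c_{i,j}=\delta_{j,2N+1-i}\alpha_i$. Since $p_i\neq -p_{2N+1-i}$ is assumed, factoring $p_i^3+p_{2N+1-i}^3=(p_i+p_{2N+1-i})(p_i^2-p_ip_{2N+1-i}+p_{2N+1-i}^2)$ yields the quadratic condition in the statement.

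The heart of the argument is to verify $\tau_{n+3}=\tau_n$ directly on the determinant. Evaluating the integral in $\psi_{i,j}^{(n)}$ gives $\psi_{i,j}^{(n)}=c_{i,j}+\frac{(-p_i/p_j)^n}{p_i+p_j}e^{\xi_i+\xi_j}$. I would then multiply row $i$ by $(-p_i)^{-n}e^{-\xi_i}$ and column $j$ by $p_j^{n}e^{-\xi_j}$. The aggregate scalar factor pulled out of the determinant is $e^{-2\sum_k\xi_k}$, which is independent of $n$. The transformed entries become $c_{i,j}(-p_j/p_i)^n e^{-\xi_i-\xi_j}+\frac{1}{p_i+p_j}$. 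Because $c_{i,j}=\delta_{j,2N+1-i}\alpha_i$, only the entries with $j=2N+1-i$ carry the $n$-dependent ratio $(-p_{2N+1-i}/p_i)^n$, and the parameter condition is exactly the requirement that $(-p_{2N+1-i}/p_i)^3=1$. Hence the transformed matrix coincides at levels $n$ and $n+3$, yielding $\tau_{n+3}=\tau_n$.

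The main obstacle, as in the $B_{\infty}$ case, will be the careful row/column bookkeeping of signs and exponentials so that the overall determinant factor is manifestly $n$-invariant across the shift $n\mapsto n+3$. A secondary consistency check is that the symmetry $c_{i,j}=c_{j,i}$ inherited from the $C_{\infty}$ Gram solution is compatible with $c_{i,j}=\delta_{j,2N+1-i}\alpha_i$ only if $\alpha_i=\alpha_{2N+1-i}$, which is harmonized with the involution $i\mapsto 2N+1-i$ on the parameters $p_i$.
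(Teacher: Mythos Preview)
Your proposal is correct and follows essentially the same route as the paper's own proof: impose $\tau_{n+3}=\tau_n$ on the $C_\infty$-folded sequence to reduce the bilinear system to the two displayed equations, then enforce $p_i^3+p_{2N+1-i}^3=0$ and $c_{i,j}=\delta_{j,2N+1-i}\alpha_i$ on the Gram determinant and verify $\tau_{n+3}=\tau_n$ by the same row/column extraction that turns the $(-p_i/p_j)^n$ factor into $(-p_{2N+1-i}/p_i)^n$ on the Kronecker-delta entries. The only minor slip is a sign in your bookkeeping sentence: the scalar pulled \emph{out} to recover $\tau_n$ is $e^{+2\sum_k\xi_k}$, not $e^{-2\sum_k\xi_k}$, but since you only use its $n$-independence this does not affect the argument.
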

\begin{proof}
Imposing a period 3-reduction
$\tau_n=\tau_{n+3}$ to the sequence of the
 $\tau$-functions
$\{\dots ,\tau_3, \tau_2 ,\tau_1, \tau_0,\tau_1,\tau_2,\tau_3,\dots\}$,
it becomes
$\{\dots ,\tau_0, \tau_1 ,\tau_1
 ,\tau_0,\tau_1,\tau_1,\tau_0,\dots\}$.
Thus we obtain two bilinear equations (\ref{3CToda-bilinear1}) and
(\ref{3CToda-bilinear2}).

In Lie algebraic terms, this corresponds to the reduction
to the affine Lie algebra
 $A_2^{(2)}$ from the infinite
 dimensional Lie algebra $C_{\infty}$.

To impose a period 3-reduction to the $N$-soliton solution,
we add a constraint $p_i^3+p_{2N+1-i}^3=0$,
($p_i\neq -p_{2N+1-i}$) for $i=1,2,\cdots , N$
and $c_{i,j}=\delta_{j,2N+1-i}\alpha_i$.
To show $\tau_{n+3}=\tau_n$, we manipulate $\tau_n$ as follows:
\begin{eqnarray*}
\tau_n&=&{\rm det}\left(\delta_{j,2N+1-i}\,\alpha_i
+\frac{1}{p_i+p_j}\left(
-\frac{p_i}{p_j}\right)^ne^{\xi_i+\xi_j}\right)_{1\leq i,j\leq 2N}\\
&=&e^{2\sum_{k=1}^{2N}\xi_k}\,
{\rm det}\left(\delta_{j,2N+1-i}\,\alpha_i\left(
-\frac{p_j}{p_i}\right)^ne^{-\xi_i-\xi_j}+\frac{1}{p_i+p_j}\right)_{1\leq
i,j\leq 2N}\\
&=&e^{2\sum_{k=1}^{2N}\xi_k}\,
{\rm det}\left(\delta_{j,2N+1-i}\,\alpha_i\left(
-\frac{p_{2N+1-i}}{p_i}\right)^n
e^{-\xi_i-\xi_{2N+1-i}}+\frac{1}{p_i+p_j}\right)_{1\leq
i,j\leq 2N}\\
\end{eqnarray*}
Since $\left(
-\frac{p_{2N+1-i}}{p_i}\right)^3=1$,
$\tau_{n+3}=\tau_n$ is satisfied.
\end{proof}

\begin{lemma}\label{lemma:pfaffian-C}
The $\tau$-function
$\tau_1$ of the bilinear equations (\ref{3CToda-bilinear1}) and
(\ref{3CToda-bilinear2}) is written in the form of pfaffian
\begin{equation}
\tau_1=\frac{1}{2^{2N}\prod_{k=1}^{2N}p_k}
\,\tau^2\,,\qquad \tau= {\rm pf}(1,2,...,2N-1,2N)\,,
\end{equation}
where ${\rm pf}(i,j)=c_{i,j}+\frac{p_i-p_j}{p_i+p_j}e^{\xi_i+\xi_j}$,
$\xi_i=p_i^{-1}x_{-1}+p_ix_1+\xi_i^0$,
$c_{i,j}=\delta_{j,2N+1-i}c_i$\,,
$c_i=-c_{2N+1-i}$\,, $p_i^2-p_ip_{2N+1-i}+p_{2N+1-i}^2=0$.
\end{lemma}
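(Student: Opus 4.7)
The plan is to start from the $2N\times 2N$ determinantal formula for $\tau_1$ given in the preceding lemma and to extract the claimed prefactor $\frac{1}{2^{2N}\prod_k p_k}$ by a column-scaling argument, then to recognize what remains as the determinant of a skew-symmetric matrix whose pfaffian is the target $\tau$. Evaluating the Gram entries with $\varphi_i^{(1)}=p_ie^{\xi_i}$ and $\varphi_j^{(-1)}=p_j^{-1}e^{\xi_j}$ gives
\[
\psi_{i,j}^{(1)}=\delta_{j,2N+1-i}\,\alpha_i-\frac{p_i}{p_j(p_i+p_j)}\,e^{\xi_i+\xi_j}.
\]
Multiplying each column $j$ by $-2p_j$ then multiplies the determinant by $\prod_{j=1}^{2N}(-2p_j)=2^{2N}\prod_k p_k$ and converts the exponential entries into $\frac{2p_i}{p_i+p_j}e^{\xi_i+\xi_j}$. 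The partial-fraction identity $\frac{2p_i}{p_i+p_j}=1+\frac{p_i-p_j}{p_i+p_j}$ cleanly separates this into a rank-one piece $e^{\xi_i}\cdot e^{\xi_j}$ plus the skew-symmetric Cauchy kernel $\frac{p_i-p_j}{p_i+p_j}e^{\xi_i+\xi_j}$ that appears in the target pfaffian.

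The next step is to absorb the rank-one contribution, together with the rescaled $c$-entries, into the skew-symmetric matrix $A$ with $A_{ij}=c_{i,j}+\frac{p_i-p_j}{p_i+p_j}e^{\xi_i+\xi_j}$. Writing the scaled matrix as $A+S$, where $S$ encodes the rank-one perturbation plus the symmetric part of the rescaled anti-diagonal $c$-entries, I would apply a matrix-determinant-lemma type computation to show that $S$ contributes trivially, provided the pfaffian constants $c_i$ are tied to the Gram constants $\alpha_i$ by a specific relation (schematically $c_i^2=-4\,p_ip_{2N+1-i}\alpha_i^2$). The consistency of this relation with the required antisymmetry $c_i=-c_{2N+1-i}$ hinges precisely on the 3-reduction constraint $p_i^2-p_ip_{2N+1-i}+p_{2N+1-i}^2=0$. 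Once the equality of determinants is in hand, the classical identity $\det(A)=[{\rm pf}(A)]^2$ for skew-symmetric $A$ yields $\tau_1=\frac{1}{2^{2N}\prod_k p_k}\tau^2$ as claimed.

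The hard part is this middle step: verifying that $S$ truly contributes nothing after pairing against the skew Cauchy kernel $A$, which amounts to nontrivial algebraic identities in the $p_i$'s that must all trace back to $p_i^2-p_ip_{2N+1-i}+p_{2N+1-i}^2=0$. A cleaner, more conceptual alternative route bypasses these identities by invoking Remark 2.14: the 3-reduced $B_\infty$ and $C_\infty$ 2D-Toda systems coincide, so the $\tau_1$ constructed here and the $B$-type $\tau_1=\tau^2$ from Corollary \ref{corollary:tau} solve the same bilinear system with the same soliton parameters and must therefore be proportional by a single $(x_1,x_{-1})$-independent constant. That constant is then pinned down to $\frac{1}{2^{2N}\prod_k p_k}$ by a short computation, for instance by comparing the vacuum contributions obtained in the limit $e^{\xi_i}\to 0$ on both sides.
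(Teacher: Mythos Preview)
Your overall strategy---scale the Gram matrix to pull out the prefactor, recognize a skew-symmetric matrix plus a rank-one piece, then invoke $\det(A\pm vv^{T})=\det(A)=[\mathrm{pf}(A)]^2$ for skew $A$---is exactly the paper's approach. The difference is that your particular scaling is suboptimal and creates the ``hard part'' you then struggle with.

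You multiply only column $j$ by $-2p_j$. This converts the exponential block to $\frac{2p_i}{p_i+p_j}e^{\xi_i+\xi_j}$ as you say, but it turns the anti-diagonal constant $\delta_{j,2N+1-i}\alpha_i$ into $-2p_{2N+1-i}\alpha_i\,\delta_{j,2N+1-i}$, which is \emph{not} skew-symmetric (recall that the $C_\infty$ condition $c_{i,j}=c_{j,i}$ forces $\alpha_i=\alpha_{2N+1-i}$). You are then left with a symmetric anti-diagonal residue in addition to the rank-one piece, and this residue is not rank one, so the matrix-determinant lemma alone will not kill it. That is the origin of the awkward quadratic relation you propose and of the unresolved ``hard part''.

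The paper instead scales row $i$ by $1/p_i$ and column $j$ by $2p_j^2$ (same net prefactor $2^{2N}\prod_k p_k$). This produces
\[
2\,\delta_{j,2N+1-i}\,\alpha_i\,\frac{p_{2N+1-i}^2}{p_i}\;-\;\frac{2p_j}{p_i+p_j}\,e^{\xi_i+\xi_j}.
\]
Now the constant term $c_{i,j}=2\alpha_i\,p_{2N+1-i}^2/p_i\,\delta_{j,2N+1-i}$ is \emph{directly} skew-symmetric, because $\alpha_i=\alpha_{2N+1-i}$ together with the 3-reduction $p_{2N+1-i}^3=-p_i^3$ gives $p_{2N+1-i}^2/p_i=-p_i^2/p_{2N+1-i}$. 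Writing $\frac{2p_j}{p_i+p_j}=1-\frac{p_i-p_j}{p_i+p_j}$ then leaves exactly the skew matrix $A_{ij}=c_{i,j}+\frac{p_i-p_j}{p_i+p_j}e^{\xi_i+\xi_j}$ minus the single rank-one term $e^{\xi_i}e^{\xi_j}$, and the identity $\det(A-vv^T)=\det(A)$ finishes the proof in one line. No residual symmetric piece, no quadratic relation for $c_i$, and no need for the alternative route via the $B_\infty$/$C_\infty$ comparison.
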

\begin{proof}
Let $\alpha_i=\alpha_{2N+1-i}$.
For $n=1$,
\begin{eqnarray*}
\tau_1&=&
{\rm det}\left(\psi_{i,j}^{(1)}\right)_{1\leq i,j\leq 2N}
={\rm det}\left(\delta_{j,2N+1-i}\,\alpha_i-\frac{1}{p_i+p_j}
\frac{p_i}{p_j}e^{\xi_i+\xi_j}\right)_{1\leq i,j\leq 2N}\\
&=&\frac{1}{2^{2N}\prod_{k=1}^{2N}p_k}\,{\rm
 det}\left(2\delta_{j,2N+1-i}\,\alpha_i\frac{p_j^2}{p_i}
-\frac{2p_j}{p_i+p_j}e^{\xi_i+\xi_j}\right)_{1\leq
 i,j\leq 2N}
\\
&=&\frac{1}{2^{2N}\prod_{k=1}^{2N}p_k}\,{\rm
 det}\left(2\delta_{j,2N+1-i}\,\alpha_i\frac{p_{2N+1-i}^2}{p_i}
-\frac{2p_j}{p_i+p_j}e^{\xi_i+\xi_j}\right)_{1\leq
 i,j\leq 2N}
\\
&=&\frac{1}{2^{2N}\prod_{k=1}^{2N}p_k}\,{\rm
 det}\left(c_{i,j}
+\frac{p_i-p_j}{p_i+p_j}e^{\xi_i+\xi_j}-e^{\xi_i+\xi_j}
\right)_{1\leq
 i,j\leq 2N}
\,,
\end{eqnarray*}
where $c_{i,j}=2\delta_{j,2N+1-i}\,\alpha_i\frac{p_{2N+1-i}^2}{p_i}$.
Then we note
\[
c_{i,j}=2\delta_{j,2N+1-i}\,\alpha_i\,\frac{p_{2N+1-i}^2}{p_i}=-2
\delta_{i,2N+1-j}\,\alpha_{2N+1-i}\,\frac{p_i^2}{p_{2N+1-i}}
=-c_{j,i}\,.
\]
Introducing $c_i=2\alpha_i\frac{p_{2N+1-i}^2}{p_i}$,
we can write as $c_{i,j}=\delta_{j,2N+1-i}c_i$ and $c_i=-c_{2N+1-i}$.

Since the $2N \times 2N$ matrix $\left(c_{i,j}
+\frac{p_i-p_j}{p_i+p_j}e^{\xi_i+\xi_j}\right)_{1\leq i,j\leq 2N}$
is skew-symmetric,
we obtain
\[
 \tau_1=\frac{1}{2^{2N}\prod_{k=1}^{2N}p_k}[{\rm pf}(1,2,...,2N-1,2N)]^2\,,
\]
where ${\rm pf}(i,j)=c_{i,j}+\frac{p_i-p_j}{p_i+p_j}e^{\xi_i+\xi_j}$.
Here we used the formula~\cite{HirotaBook}
\[
{\rm det}(a_{i,j}-y_iy_j)_{1\leq i,j\leq 2N}
={\rm det}(a_{i,j})_{1\leq i,j\leq 2N}=[{\rm pf}(1,2,...,2N)]^2\,,
\]
where $a_{i,j}=-a_{j,i}$, $a_{i,i}=0$, ${\rm pf}(i,j)=a_{i,j}$.
\end{proof}
Letting $f=\tau_{1}$ and $g=\tau_{0}$,
we obtain the bilinear equations (\ref{bilinear-1}) and
(\ref{bilinear-2}) from the bilinear equations
(\ref{3CToda-bilinear2}) and
(\ref{3CToda-bilinear1}).
As is shown in \cite{Nimmo-Willox}, the period 3-reduction of
$C_{\infty}$ 2D-Toda system gives the same result
 as the period 3-reduction of $B_{\infty}$ 2D-Toda after the relabelling
 indexes.

\section*{References}

\end{document}

For $n=1$,
\begin{eqnarray}
\fl \psi_{i,j}^{(1)}&=&c_{i,j}+2\int_{-\infty}^{x_1}\varphi_i^{(1)}
\varphi_j^{(0)}dx_1=c_{i,j}+\int_{-\infty}^{x_1}\varphi_i^{(1)}
\varphi_j^{(0)}dx_1
+\int_{-\infty}^{x_1}\varphi_i^{(1)}
\varphi_j^{(0)}dx_1
\nonumber\\
\fl&=&c_{i,j}+\int_{-\infty}^{x_1}\varphi_i^{(1)}
\varphi_j^{(0)}dx_1
+\left(-\int_{-\infty}^{x_1}\varphi_i^{(0)}
\varphi_j^{(1)}dx_1+\varphi_i^{(0)}
\varphi_j^{(0)}\right)
\nonumber\\
\fl &=&c_{i,j}+\int_{-\infty}^{x_1}D_{x_1}\varphi_i^{(0)}
\cdot \varphi_j^{(0)}dx_1+\varphi_i^{(0)}
\varphi_j^{(0)}\,.
\end{eqnarray}
Thus we have
\begin{eqnarray}
\fl \tau_1=
{\rm det}\left(\psi_{i,j}^{(1)}\right)_{1\leq i,j\leq 2N}
&=&{\rm det}\left(c_{i,j}+\int_{-\infty}^{x_1}D_{x_1}\varphi_i^{(0)}
\cdot \varphi_j^{(0)}dx_1+\varphi_i^{(0)}
\varphi_j^{(0)}\right)_{1\leq i,j\leq 2N}\nonumber\\
\fl &=&{\rm det}\left(c_{i,j}+\int_{-\infty}^{x_1}D_{x_1}\varphi_i^{(0)}
\cdot \varphi_j^{(0)}dx_1\right)_{1\leq i,j\leq 2N}
\,.
\end{eqnarray}
Since the $2N \times 2N$ matrix
$\left(c_{i,j}+\int_{-\infty}^{x_1}D_{x_1}\varphi_i^{(0)}
\cdot \varphi_j^{(0)}dx_1\right)_{1\leq i,j\leq 2N}$ is skew-symmetric
matrix,
we obtain
\[
 \tau_1=[{\rm pf}(1,2,...,2N-1,2N)]^2\,,
\]
where ${\rm pf}(i,j)=c_{i,j}+\int_{-\infty}^{x_1}D_{x_1}\varphi_i^{(0)}
\cdot \varphi_j^{(0)}dx_1$.
Here we used the formula~\cite{HirotaBook}
\[
{\rm det}(a_{i,j}-y_iy_j)_{1\leq i,j\leq 2N}
={\rm det}(a_{i,j})_{1\leq i,j\leq 2N}=[{\rm pf}(1,2,...,2N)]^2\,,
\]
where $a_{i,j}=-a_{j,i}$, $a_{i,i}=0$, ${\rm pf}(i,j)=a_{i,j}$.